\pgfplotsset{compat=1.15}
\def \B{\mathcal{B}}
\def \D{\mathcal{D}}
\def \E{\mathcal{E}}
\def \M{\mathcal{M}}
\def \Q{\mathcal{Q}}
\def \S{\mathcal{S}}
\def \U{\mathcal{U}}
\def \W{\mathcal{W}}
\def \X{\mathcal{X}}
\def \Y{\mathcal{Y}}
\def \Z{\mathcal{Z}}
\def \fu{\mathbf{u}}
\def \fv{\mathbf{v}}
\def \fx{\mathbf{x}}
\def \fy{\mathbf{y}}
\def \fY{\mathbf{Y}}
\def \f0{\mathbf{0}}
\definecolor{blau_1a}{RGB}{93,133,195}
\definecolor{blau_2a}{RGB}{0,156,218}
\definecolor{gruen_3a}{RGB}{80,182,149}
\definecolor{gruen_4a}{RGB}{175,204,80}
\definecolor{gruen_5a}{RGB}{221,223,72}
\definecolor{orange_6a}{RGB}{255,224,92}
\definecolor{orange_7a}{RGB}{248,186,60}
\definecolor{rot_8a}{RGB}{238,122,52}
\definecolor{rot_9a}{RGB}{233,80,62}
\definecolor{lila_10a}{RGB}{201,48,142}
\definecolor{lila_11a}{RGB}{128,69,151}
\definecolor{blau_1b}{RGB}{0,90,169}
\definecolor{blau_2b}{RGB}{0,131,204}
\definecolor{gruen_3b}{RGB}{0,157,129}
\definecolor{gruen_4b}{RGB}{153,192,0}
\definecolor{gruen_5b}{RGB}{201,212,0}
\definecolor{orange_6b}{RGB}{253,202,0}
\definecolor{orange_7b}{RGB}{245,163,0}
\definecolor{rot_8b}{RGB}{236,101,0}
\definecolor{rot_9b}{RGB}{230,0,26}
\definecolor{lila_10b}{RGB}{166,0,132}
\definecolor{lila_11b}{RGB}{114,16,133}
\definecolor{mycolor1}{RGB}{249,245,233}
\theoremstyle{remark}	\newtheorem{theorem}{Theorem}
\theoremstyle{remark}	\newtheorem{lemma}[theorem]{Lemma}
\theoremstyle{remark}	
\theoremstyle{remark}	
\theoremstyle{remark} \newtheorem{definition}{Definition}
\theoremstyle{remark} \newtheorem{remark}{Remark}
\theoremstyle{remark} 
\begin{document}
\title{Deterministic Identification Over Poisson Channels} 

\author{
	\vspace{0.1cm}
    \IEEEauthorblockN{Mohammad J. Salariseddigh\IEEEauthorrefmark{1}, Uzi Pereg\IEEEauthorrefmark{1}, 
    Holger Boche\IEEEauthorrefmark{2}, Christian Deppe\IEEEauthorrefmark{1}, and Robert Schober\IEEEauthorrefmark{3}} \\
	\vspace{0.25cm}
    \IEEEauthorblockA{\normalsize \IEEEauthorrefmark{1} Institute for Communication Engineering, Technical University of Munich \\
    \IEEEauthorrefmark{2} Chair of Theoretical Information Technology, Technical University of Munich, and \\
     Munich Center for Quantum Science and Technology (MCQST) \\
    \IEEEauthorrefmark{3} Institute for Digital Communications, Friedrich-Alexander-University Erlangen-N{\"u}rnberg \\
    {\tt Email}: {\{mohammad.j.salariseddigh,~uzi.pereg,~boche,~christian.deppe\}@tum.de,\\ robert.schober@fau.de} \\
}}
\maketitle
\begin{abstract}
Deterministic identification (DI) for the discrete-time Poisson channel, subject to an average and a peak power constraint, is considered. It is established that the code size scales as $2^{(n\log n)R}$, where $n$ and $R$ are the block length and coding rate, respectively. The authors have recently shown a similar property for Gaussian channels \cite{Salariseddigh_arXiv_ITW}. Lower and upper bounds on the DI capacity of the Poisson channel are developed in this scale. Those imply that the DI capacity is infinite in the exponential scale, regardless of the dark current, i.e., the channel noise parameter.
\end{abstract}
\begin{IEEEkeywords}
Channel capacity, identification, deterministic codes, Poisson channel.
\end{IEEEkeywords}
\IEEEpeerreviewmaketitle
\section{Introduction}
In Shannon's communication paradigm \cite{S48}, a sender, Alice, encodes her message in a manner that will allow the receiver, Bob, to recover the message. In other words, the receiver's task is to determine which message was sent. However, in the identification setting, the coding scheme is designed to accomplish a different goal \cite{AD89}. The decoder's only task is to determine whether a \emph{particular} message was sent or not, while the transmitter does not know which message the decoder is interested in. Applications of identification are often associated with event-triggered systems. Specifically, in molecular communication (MC), information is transmitted via bio-chemical signals or molecules \cite{NMWVS12,FYECG16}. Research on micro-scale molecular technology, such as intra-body networks, is still in its early stages and faces many challenges. Nonetheless, MC is a promising contender for future applications, such as the sixth generation of cellular communication (6G) \cite{6G+,6G_PST}, and nanomedical applications including cancer treatment \cite{Hobbs_ea98,Jain99,Wilhelm16} and targeted drug delivery \cite{Muller04,Nakano13}.

Ahlswede and Dueck \cite{AD89} employ a randomized identification coding scheme, in which the encoder has access to a large source of random bits, and the transmission is generated using the output of this source. It is well-known that such a resource does not increase the transmission capacity \cite{A78}. On the other hand, Ahlswede and Dueck \cite{AD89} have established that given local randomness at the encoder, reliable identification is accomplished with a double-exponential coding scale in the block length $n$, i.e., $\sim 2^{ 2^{nR}}$ \cite{AD89}. This behavior differs radically from the conventional transmission setting, where the code size grows of only exponentially in the block-length, i.e., $\sim{2^{nR}}$. Beyond the exponential gain in identification, the extension of the problem to various scenarios reveals a different property of the identification capacity, compared to the transmission capacity \cite{feedback,correlation,BV00,BD18_2,W04,BL17}. For instance, strictly causal feedback from the receiver to the transmitter \emph{can} increase the identification capacity of a memoryless channel \cite{feedback}, as opposed to the transmission capacity \cite{S56}. Nevertheless, it can be difficult to implement randomized-encoder identification codes with such a performance. The construction of identification codes is considered in \cite{SCR20-2,VK93,KT99,Bringer09,Bringer10}. Furthermore, identification for Gaussian channels is studied in \cite{MasterThesis,LDB20,Labidi2021,Ezzine2021,BV00,Salariseddigh_ITW,Salariseddigh_arXiv_ITW}.

In the deterministic coding setup of identification, for a discrete memoryless channel (DMC), the code size grows only exponentially in the blocklength, as in a conventional transmission \cite{AD89,AN99,Salariseddigh_arXiv_ICC,J85,Bur00}. However, the achievable identification rates are significantly higher compared to that of the transmission rates \cite{Salariseddigh_arXiv_ICC}. Deterministic codes often have the advantage of simpler implementation and simulation \cite{PP09}, explicit construction \cite{A09}, and a single-block reliable performance. In recent works by the authors \cite{Salariseddigh_ITW,Salariseddigh_arXiv_ITW,Salariseddigh_ICC,Salariseddigh_arXiv_ICC}, we have considered deterministic identification (DI) for channels with an average power constraint, including DMCs and Gaussian channels with fast or slow fading. In the Gaussian case, we have shown that in this scale, the code size scales as $2^{(n\log n)R}$, by deriving finite bounds on the DI capacity.

The Poisson channel is a useful model for an optical communication link with pulse amplitude modulation (PAM) and a direct-detection receiver \cite{Fillmore69,Gagliardi76,Shamai90,Shapiro09,Wyner88_I,Wyner88_II,Massey81,Verdu99}. In particular, the Poisson channel with pulse position modulation (PPM) is investigated in \cite{Bar73}, which is advanced in \cite{Pierce81}. The continuous-time Poisson channel (CTPC) with infinite bandwidth exposed to a peak-power constraint and a possible average power constraint on its input signal is extensively studied in the literature. Capacity of a general CTPC where the input signal is not restricted to be pulse amplitude modulation (PAM) is derived exactly. For instance, computation of capacity using different methods is addressed first in \cite{Kabanov78} where the capacity under a peak power constraint and PPM restriction using martingale techniques is obtained. Those results subject to both peak and average constraints are extended in \cite{Davis80}. Further, in \cite{Wyner88_I,Wyner88_II}, the capacity is derived using information bearing ``rate functions" of unbounded bandwidth and the entire reliability function for all rates below the capacity as well as the error exponent and construction of specific capacity-achieving codes are given. A similar treatment is conducted in \cite{Wyner84}. In \cite{Wyner88_I}, it is shown that a two level modulation on rate $\lambda(t)$ is capacity-achieving. To emphasize, the models which take into account the bandwidth limitations are investigated in \cite{Butman82,Lesh83,Lipes80,Shamai91,Shamai93}. The associated capacities, optimum coding and modulation methods for this type of channels are sharply different than that of the unbounded bandwidth cases.

The reliability function of an ideal Poisson channel with noiseless feedback is determined in \cite{Lapidoth93}. An extension to multi-user settings is given in \cite{Lapidoth98,Lapidoth03_4,Bross09}. Extensions to multi-user networks and channel side information are studied in \cite{Lapidoth98,Lapidoth03_4}. The Poisson channel with a random dark current, which is not known to the transmitter or the receiver, is considered in \cite{Frey91} and a closed-form capacity formula is derived when the the dark current is deterministic. The capacity under an $L_p$-norm constraint is studied in \cite{Frey92}. Bounds on the transmission capacity of a channel with an average power constraint were established in \cite{Brady90}. An improved lower bound on the transmission capacity, and determined an asymptotic upper bound for the discrete-time Poisson channel (DTPC) in the limit of infinite constraints, i.e., when average and peak power constraints both tend to infinity while their ratio remains constant is derived in \cite{Lapidoth03_1,Lapidoth03_2}. The Poisson multiple-access channel is studied in \cite{Lapidoth98}. The asymptotic capacity in the limit of a low-value average and possibly peak power constraint, or at both average and peak power constraints, is considered in \cite{Lapidoth08_1,Lapidoth11}. Optimal codes for a DTPC with memory under peak and average power constraints is constructed in \cite{Ahmadypour20}. Bounds on the capacity of the compound DTPC are given in \cite{Etemadi19}. The Poisson channel with fading is addressed in \cite{Chakraborty07}. Covert communication over Poisson channel is studied in \cite{Wang21}.

The Poisson channel is relevant for practical 6G networks in the context of MC \cite{Gohari16} and optical communications \cite{Cao13,Mceliece81}. The functionality of some 6G applications requires only a message identification. In particular, in the context of MC, a nano-device might seek only to formulate its knowledge about a specific task in terms of a reliable Yes/No answer. For such tasks, the identification problem and related code design plays a crucial role, and is deemed as a potential key technology for 6G. For instance, in the course of targeted drug delivery \cite{Muller04,Nakano13} or cancer treatment \cite{Hobbs_ea98,Jain99,Wilhelm16}, a nano-device seeks to know whether a specific drug is released or not, whether another nano-device has replicated itself or not, whether a certain molecule was detected or not, whether the molecular storage is empty or not, whether the blood PH exceeds a critical threshold or not \cite{Nakano14}, etc.

In this work, we consider DI over a DTPC under average and peak power constraints.
In particular, we establish that the code size scales as $2^{(n\log n)R}=n^{nR}$, by deriving finite positive bounds on the DI capacity of the DTPC. The approach is similar to our previous analysis for Gaussian channels \cite{Salariseddigh_ITW,Salariseddigh_arXiv_ITW}. However, the analysis and the upper bound in the present paper are different. Here, in the achievability proof, we consider packing of hyper spheres with radius $\sim n^{\frac{1}{4}}$ inside a larger hyper cube. While the radius of the small spheres in the Gaussian derivation in \cite{Salariseddigh_arXiv_ITW} tends to zero, the radius here grows in the block-length, $n$.
Yet, we can pack a super-exponential number of spheres within a larger cube.
In the converse part, the derivation for the DTPC is more involved and leads to a larger upper bound. Instead of establishing a minimum distance between the codewords, as in the Gaussian derivation \cite{Salariseddigh_arXiv_ITW}, we use the letter-wise ratio.

The rest of this paper is structured as follows. Section~2 introduces channel model and contains required definitions for coding. Section~3 studies the previous contributions and results for the Shannon capacity of CTPC/DTPC. The main results
for the DTPC are presented in Section~4. Finally, in Section~5 the paper concludes with summary and important remarks.
\section{Definitions}
\label{sec:preliminaries}
In this section, we introduce the channel model and coding definitions.
\subsection{Molecular Transmission and Reception}
\label{Subsec.Poisson_Concentration_Transmitter}
In MC, the transmitter has a reservoir that consists of a limited number of particles. Hence, the transmitter controls the number of released particles via a biological outlet, the size of which is controlled by the actual values of the transmitted codewords. Given a sufficiently large number of particles in the reservoir and a small probability of each exiting through the storage, the Poisson distribution with a codeword-dependent mean is a standard model that represents the number of released particles stochastically \cite{Gohari16}.

It should be emphasized that the particle generator block is directly related to the bio-chemical structure of the nano-transmitter, and it is impossible to manipulate this structure to extract additional randomness. Therefore, the encoding procedure is entirely deterministic. The transmission is constrained by the maximal number of particles released in each time slot. 


The receiver absorbs any molecule that hits its surface, and counts the total number of molecules that have hit it so far. The fraction of hitting molecules can be derived by solving \emph{Fick's law} subject to initial and boundary conditions obeying the absorbing process following the techniques in \cite{Schulten00,Redner01}.
\subsection{Notation}
In this section, we introduce the channel models and coding definitions. We use the following notation conventions throughout. Calligraphic letters $\X,\Y,\Z,\ldots$ are used for finite sets. Lowercase letters $x,y,z,\ldots$ stand for constants and values of random variables, and uppercase letters $X,Y,Z,\ldots$ stand for random variables. The distribution of a random variable $X$ is specified by a probability mass function (pmf) $p_X(x)$ over a finite set $\X$. All logarithms and information quantities are to the base $2$. The set of consecutive natural numbers from $1$ through $M$ is denoted by $[\![M]\!]$. The set of whole numbers is denoted by $\mathbb{N}_{0} \triangleq \{0,1,2,\ldots\}$. The $\ell_2$-norm and $\ell_{\infty}$-norm are denoted by $\norm{\mathbf{x}}$ and $\norm{\mathbf{x}}_{\infty}$, respectively. Further, we denote the $n$-dimensional hypersphere of radius $r$ centered at $\fx_0$ with respect to the $\ell_2$-norm by
\begin{align}
    \S_{\fx_0}(n,r) = \{\fx\in\mathbb{R}^n : \norm{\fx-\fx_0} \leq r \} \,.\
\end{align}
An $n$-dimensional cube with edge $A$ is defined by
\begin{align}
    \Q_{\f0}(n,A) = \{\fx\in\mathbb{R}^n : 0 < x_t\leq A \} \,,\,
\end{align}
for all $t\in[\![n]\!]$. In the continuous case, we use the cumulative distribution function $F_X(x)=\Pr(X\leq x)$ for $x\in\mathbb{R}$, or alternatively, the probability density function (pdf) $f_X(x)$, when it exists.
\subsection{Poisson Channel}
Here, we consider the Poisson channel $\W$ which arises as key channel model in the context of MC \cite{Gohari16}. We assume that the channel takes a positive real-valued input $X \in \mathbb{R}_{>0}$ and outputs a symbol $Y\in\mathbb{N}_0$ where $Y \sim \text{Pois}(\lambda+X)$. Here, $\lambda \in (0,\infty)$ is the dark current (see Figure~\ref{Fig.PoissonChannel}). The letter-wise channel law is given by
\begin{align}
    W(y|x)
    = \frac{e^{-(\lambda+x)}(\lambda+x)^y}{y!} \,.\,
\end{align}
Since the channel is memoryless, for $n$ channel uses, the law is given by 
\begin{align}
    \label{Eq.Poisson_Channel_Law}
    W^n(\fy|\fx) & = \prod_{t=1}^n W(y_t|x_t) 
    \nonumber\\
    & = \prod_{t=1}^n \frac{e^{-(\lambda+x_t)}(\lambda+x_t)^{y_t}}{y_t!} \;.\,
\end{align}
The peak and average power constraints on the codewords are
\begin{align}
    \label{Ineq.Constraints}
    0 < x_{t} \leq P_{\,\text{max}} \quad \text{and} \quad \frac{1}{n}\sum_{t=1}^{n} x_{t} \leq P_{\,\text{avg}} \,,\,
\end{align}
respectively, for all $t\in[\![n]\!]$, where $P_{\,\text{max}}, P_{\,\text{avg}} > 0$ represent the values for peak and average power constraints. We note that while the average power constraint for Gaussian channel is a non-linear (square) expression of symbols (signify the signal energy), here for the Poisson channel we have a linear cost function. This difference comes mainly from the type of transmitter for Poisson model when it sends either $X$ number of molecules or a realization of a Poisson random variable with mean value of $X$ \cite{Gohari16}.
\begin{figure}[H]
    \centering
	\scalebox{1}{
\tikzstyle{l} = [draw, -latex']
\tikzstyle{Block1} = [draw,top color=white, bottom color=white!80!orange, rectangle, rounded corners, minimum height=2em, minimum width=2.5em]
\tikzstyle{Block2} = [draw,top color=white, bottom color=cyan!30, rectangle, rounded corners, minimum height=2em, minimum width=2em]
\tikzstyle{Block3} = [draw,top color=white, bottom color=cyan!10, rectangle, rounded corners, minimum height=2em, minimum width=2em]
\tikzstyle{Block4} = [draw,top color=white, bottom color=red!20, rectangle, rounded corners, minimum height=2em, minimum width=2em]
\tikzstyle{Block5} = [draw,top color=white, bottom color=white!80!orange, rectangle, rounded corners, minimum height=2em, minimum width=2.5em]
\tikzstyle{input} = [coordinate]
\tikzstyle{sum} = [draw, circle,inner sep=0pt, minimum size=5mm,  thick]
\tikzstyle{arrow}=[draw,->]
\tikzstyle{small_node} = [draw, circle,inner sep=0pt, minimum size=.8mm,thick]
\begin{tikzpicture}[auto, node distance=2cm,>=latex']
\node[] (M) {$i$};
\node[Block1,right=.5cm of M] (enc) {Enc};
\node[Block2, right=.7cm of enc] (channel) {$\text{\small Diffusion Channel}$};
\node[Block5, right=.7cm of channel] (dec) {Dec};
\node[below=.5cm of dec] (Target) {$j$};
\node[right=.5cm of dec] (Output) {\text{\small Yes/No}};
%
\draw[->] (M) -- (enc);
\draw[->] (enc) -- node[above]{$\textbf{u}_i$} (channel);
\draw[->] (channel) --node[above]{$\fY$} (dec);
\draw[->] (dec) -- (Output);
\draw[->] (Target) -- (dec);
\end{tikzpicture}
}
	\caption{Deterministic identification for the Poisson channel. $Y(t)$ is Poisson distributed with rate $\lambda + u_i(t)$. Given the observation $\fY$, the decoder asks whether $j$ equals $i$ or not.}
	\label{Fig.PoissonChannel}
\end{figure}
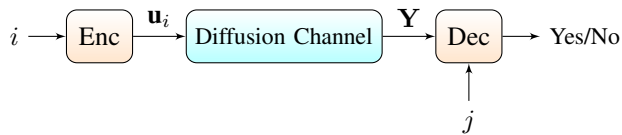
\subsection{Coding for the Poisson Channel}
The definition of a DI code for the Poisson channel is given below.
\begin{definition}[Poisson DI Code]
\label{GdeterministicIDCode}
An $\left( L(n,R),n\right)$ DI code for a Poisson channel $\W$ under average and peak power constraint of $P_{\,\text{ave}}$ and $P_{\,\text{max}}$, respectively, assuming $L(n,R)$ is an integer, is defined as a system $(\U,\mathscr{D})$ which consists of a codebook $\U=\{ \mathbf{u}_i \}_{i\in[\![L(n,R)]\!]}$, $\U\subset \X^n$, such that
\begin{align}
    \label{Ineq.Peak_Power_Const}
    0 < u_{i,t} \leq P_{\,\text{max}} \,,\, \\
    \label{Ineq.Ave_Power_Const}
    \frac{1}{n} \sum_{t=1}^{n} u_{i,t} \leq P_{\,\text{ave}} \,,\,
\end{align}
for all $i\in[\![L(n,R)]\!]$, all $t\in[\![n]\!]$, and a collection of decoding regions
$\mathscr{D}=\{ \D_i \}_{i\in[\![L(n,R)]\!]}$ with
\begin{align}
    \bigcup_{i=1}^{L(n,R)}\D_i\subset\mathbb{N}_0^n \;.\,
\end{align}
Given a message $i\in [\![L(n,R)]\!]$, the encoder transmits $\mathbf{u}_i$. The decoder's aim is to answer the following question: Was a desired message $j$ sent or not? There are two types of errors that may occur:
Rejection of the true message or accepting a false message. Those are referred to as type I and type II errors, respectively.

The error probabilities of the identification code $(\U,\mathscr{D})$ are given by
\begin{align}
    P_{e,1}(i) & = 1 -\sum_{\fy \in \D_i} W^n \left( \fy \, \big| \, \fu_i \right) &&\hspace{-2cm}  \text{(missed-identification error)} \,,\, \label{Eq.GTypeIErrorDef}
    \\
    P_{e,2}(i,j) & = \sum_{\fy \in \D_j} W^n \left( \fy \, \big| \, \fu_i \right) && \hspace{-2cm} \text{(false identification error)} \,.\,
    \label{Eq.GTypeIIErrorDef}
\end{align}
An $(L(n,R),n,\lambda_1,\lambda_2)$ DI code further satisfies
\begin{align}
\label{Eq.GTypeIError}
 P_{e,1}(i) &\leq \lambda_1 \,,\  \\
\label{Eq.GTypeIIError}
 P_{e,2}(i,j) &\leq \lambda_2 \,,\
\end{align}
for all $i,j\in [\![L(n,R)]\!]$, such that
$i\neq j$.

A rate $R>0$ is called achievable if for every $\lambda_1,\lambda_2>0$ and sufficiently large $n$, there exists an $(L(n,R),n,\lambda_1,\lambda_2)$ DI code. The operational DI capacity of the Poisson channel is defined as the supremum of achievable rates, and will be denoted by $\mathbb{C}_{DI}(\W,L)$. 
\end{definition}
\section{Previous Contributions}
In this section, we review known results for the capacity of a DTPC in the asymptotic regimes of $P_{\,\text{avg}} \to 0$ and $P_{\,\text{avg}} \to \infty$. The asymptotic capacity with an average-power constraint $P_{\,\text{avg}}$, when $P_{\,\text{avg}} \to\infty$ and $\frac{P_{\,\text{avg}}}{\lambda}$ is fixed was studied in \cite{Brady90}. Furthermore, the same problem for a constant $\lambda$, with and without an additional peak power constraint was studied in \cite{Lapidoth03_1}. The first-order asymptotic capacity for $P_{\,\text{avg}} \to 0$, both when $\frac{P_{\,\text{avg}}}{\lambda}$ is kept constant and when $\lambda$ is fixed, with and without a peak power constraint, was determined in \cite{Lapidoth11}. Later, the previously obtained first-order capacity approximation when $P_{\,\text{avg}} = \lambda$ is constant were improved in \cite{Wang14} where a refined approximation was determined, including an exact characterization of the second-order term, as well as an asymptotic characterization of the third-order term with respect to the dark current. Asymptotic upper bounds for the DTPC capacity with an average-power constraint were given in \cite{Lapidoth03_1,Lapidoth11,Wang14,Aminian15}. Explicit asymptotic and non-asymptotic capacity lower bounds for several settings were given in \cite{Lapidoth03_1,Lapidoth11,Wang14,Martinez07,Cao10,Yu14}. 

Previously, the best known non-asymptotic upper bound, which was in fact the best capacity upper bound outside the limiting case $P_{\,\text{ave}} \to 0$, was derived in \cite{Martinez07}. However, the proof suffers a small gap, as mentioned in \cite{Lapidoth03_1}, and is not considered completely rigorous. Later, in \cite[Th.~8]{Cheraghchi19} strictly tighter upper bounds than the bound in \cite{Martinez07} for all $P_{\,\text{avg}} > 0$ which are considered the best current capacity upper bounds for the DTPC with zero dark current term, i.e., $\lambda = 0$ subject to an average power constraint $P_{\,\text{avg}}$ for all values of $P_{\,\text{avg}}$ outside the limiting case $P_{\,\text{avg}} \to 0$. In the same paper, the result of \cite{Martinez07} was recovered as an special (sub-optimal) case, thus yielding a rigorous proof for the bound proposed in \cite{Martinez07}. As well, the same authors in \cite[Th.~1]{Cheraghchi20} showed derived a significantly improved non-asymptotic upper bound on the capacity of the DTPC with constant positive dark current $\lambda \geq 0$ and an average power constraint $P_{\,\text{avg}}$ in non-asymptotic regimes of $P_{\,\text{avg}}$.
\subsection{Capacity-Achieving Distributions}
There has been an enormous literature focusing on the properties of capacity-achieving distributions for different channels. This problem is well-understood for quite general classes of additive noise channels under several input constraints (see, e.g., the early works \cite{Shamai90,Smith71,Abou01} and the recent works \cite{Elmoslimany17,Fahs17,Dytso18}). For the DTPC, in the absence of input constraints, capacity is infinite. The DTPC under a peak power constraint alone was addressed in \cite{Dytso21}, and was shown that the support size is of an order between $\sqrt{P_{\,\text{max}}}$ and $\sim P_{\,\text{max}} \log^2 P_{\,\text{max}}$. In particular, they characterized the capacity in terms of the output optimal distribution where capacity equals $-\log P_{Y^*}(0)$ for $P_{Y^*}(0)$ to be the optimal output distribution. An analytic expression for the transmission capacity of a the DTPC with an average power constraint alone, is still open. However, several bounds and asymptotic behaviors for the DTPC in different setups have been established. For instance, it was shown that a capacity-achieving input distribution for the DTPC under an average power constraint must have a finite support. The number of mass points depends on the average and the peak power constraints, and increases to infinity as the constraints are relaxed \cite{Shamai90}. It was conjectured by Shamai \cite{Shamai90} that the support of such a distribution must be countably infinite. The result was extended in \cite{Cao13_1,Cao13_2} and it was stated that the support of such a distribution for the DTPC under an average-power constraint must have an \emph{unbounded support}. Moreover, they also proved that such a distribution has a non-zero mass at $x = 0$, and, if a peak power constraint $P_{\,\text{max}}$ is present, at $x = P_{\,\text{max}}$ as well. Unlike additive noise channels, less is known about the capacity-achieving distributions of a DTPC when there is only an average-power constraint. In \cite{Cheraghchi19} it was shown that such a distribution for the DTPC with an arbitrary $\lambda \geq 0$, under an average-power constraint and/or a peak power constraint, is discrete (see the conjecture by Shamai \cite{Shamai90}). It was further shown that the support of such a capacity-achieving distribution for the DTPC under an average-power constraint and/or a peak power constraint has a finite intersection with every bounded interval \cite[see Th.~14]{Cheraghchi19}. Further discussions on the capacity-achieving distributions are provided in \cite{Cao13}. Here, we will consider the identification setting, where the receiver is not required to determined the message, but rather identifies a specific task.
\subsection{Asymptotic Characterizations}
In the sequel, we denote the capacity under an average power constraint $P_{\,\text{ave}}$, a peak power constraint $P_{\,\text{max}}$ and the dark current $\lambda$ by $\mathbb{C}\left( \lambda, P_{\,\text{ave}}, P_{\,\text{max}}\right)$.
For the small values of $P_{\,\text{ave}}$, in \cite{Lapidoth11} it was shown that the asymptotic capacity of the DTPC, i.e., when the average input power tends to zero while the peak-power, if finite, is fixed, scales as $- P_{\,\text{ave}} \log P_{\,\text{ave}}$, i.e.,
\begin{align}
    \lim_{P_{\,\text{ave}} \to 0} \frac{\mathbb{C}(\lambda = cP_{\,\text{ave}},P_{\,\text{ave}},P_{\,\text{max}})}{P_{\,\text{ave}}\log P_{\,\text{ave}}} = -1 \;,\,
\end{align}
for any $c\in[0,\infty)$ and $P_{\,\text{max}} \in (0,\infty]$. Furthermore, they provided the following upper bound
\begin{align}
    \label{Ineq.UB_Lapidoth}
    \mathbb{C}(0,P_{\,\text{ave}},\infty) \leq -P_{\,\text{ave}} \log p - \log(1-p) + \frac{P_{\,\text{ave}}}{\beta} + P_{\,\text{ave}} \cdot \max \left(0,\left( \frac{1}{2} \log \beta + \log \left( \frac{\overline{\Gamma}\left( \frac{1}{2}, 1/\beta \right)}{\sqrt{\pi}} + \frac{1}{2\beta} \right) \right) \right) \,,\,
\end{align}
where $p \in (0,1)$ and $\beta > 0$ are arbitrary constants and $\overline{\Gamma}(.)$ is the upper incomplete Gamma function. Later, in \cite{Wang14}, for a small value of $P_{\,\text{ave}}$ the higher order asymptotic behavior for $\mathbb{C}(P_{\,\text{ave}})$ was characterized and given by
\begin{align}
    \label{Eq.Asymptotic_Wang}
    \mathbb{C}(\lambda = cP_{\,\text{ave}},P_{\,\text{ave}},P_{\,\text{max}}) = -P_{\,\text{ave}} \log P_{\,\text{ave}} - P_{\,\text{ave}} \log \left( - \log P_{\,\text{ave}} \right) + \mathcal{O}(P_{\,\text{ave}}) \,,\,
\end{align}
where $c \in [0,\infty)$. This upper bound holds irrespective of whether a peak power constraint is imposed or not as long as $P_{\,\text{max}}$ is positive and does not approach zero together with $\epsilon$. Also the following upper bound was stated.
\begin{align*}
    \mathbb{C}(\lambda = cP_{\,\text{ave}},P_{\,\text{ave}},P_{\,\text{max}}) \hspace{-.3mm} \leq \hspace{-.3mm} P_{\,\text{ave}} \hspace{-.3mm} - \hspace{-.3mm} P_{\,\text{ave}} \log \log P_{\,\text{ave}} - \log (1 - P_{\,\text{ave}}) \hspace{-.3mm} - \hspace{-.3mm} P_{\,\text{ave}} \log \left( 1 - \frac{1}{\log P_{\,\text{ave}}} \right) \hspace{-.3mm} + \hspace{-.3mm} P_{\,\text{ave}} \cdot \sup_{x \geq 0} \phi_{\mu}(x) \,,\,
\end{align*}
which matches the asymptotic behavior given in (\ref{Eq.Asymptotic_Wang}) where
\begin{align}
    \phi_{\mu}(x) \coloneqq \frac{1-e^{-x}}{x} \log \left(\frac{-x}{P_{\,\text{ave}} \log P_{\,\text{ave}}} \right) \;.\,
\end{align}

For a large value of $P_{\,\text{ave}}$, and in the absence of a peak power constraint, it was shown in \cite{Lapidoth08_2} that
\begin{align}
    \lim_{P_{\,\text{ave}}\to\infty} \left[ \mathbb{C}(\lambda, P_{\,\text{ave}},\infty) - \frac{1}{2} \log P_{\,\text{ave}} \right] = 0 \,,\,
\end{align}
where the dark current is a non-negative constant, i.e., $\lambda \geq 0$. Expressions for the capacity, based on the ratio of the average and the peak power constraint, i.e., $\alpha = \frac{P_{\,\text{ave}}}{P_{\,\text{max}}}$ was provided in \cite{Lapidoth03_2} as follows
\begin{align}
    \mathbb{C}(\lambda,P_{\,\text{ave}},P_{\,\text{max}}) =
    \begin{cases}
        \frac{1}{2} \log P_{\,\text{max}} + (\alpha-1)u  - \log \left( \frac{1}{2} - \alpha u \right) - \frac{1}{2} \log \left( 2\pi e \right) + \mathcal{O}(1) \qquad & \qquad 0 < \alpha < \frac{1}{3} \,,\ \\
        \frac{1}{2} \log P_{\,\text{max}} - \frac{1}{2} \log \frac{\pi e}{2} + \mathcal{O}(1) \qquad & \qquad \frac{1}{3} \leq \alpha \leq 1 \,,\,
    \end{cases}
\end{align}
for $u$ is the non-zero solution to
\begin{align}
    \sqrt{\pi}\, \text{erf}\left(\sqrt{u}\right) \left( \frac{1}{2} - \alpha u \right) - \sqrt{u} e^{-u} = 0 \,,\,
\end{align}
with $\text{erf}(.)$ being the Gauss error function.
Observe that $\alpha \ll 1$ represents the regimes of very weak peak power constraints, whereas $\alpha = 1$ corresponds to the absence of an average power constraint. Also, the term $\mathcal{O}(1)$ vanishes as $P_{\,\text{ave}},P_{\,\text{max}} \to \infty$ where $\alpha$ is considered to be constant.

The previously best upper bound in the presence of only average power constraint $P_{\,\text{ave}}$ in any regime outside the regime $P_{\,\text{ave}} \to 0$ was derived \cite[see Eq.~(10)]{Martinez07} and is given by
\begin{align}
    \label{Eq.Martinez}
    \mathbb{C}\left(\lambda, P_{\,\text{ave}}, \infty \right) \leq \left( P_{\,\text{ave}} + \frac{1}{2} \right) \log \left( P_{\,\text{ave}} + \frac{1}{2} \right) - P_{\,\text{ave}} \log P_{\,\text{ave}} - \frac{1}{2} + \log \left( 1 + \frac{\sqrt{2e}-1}{\sqrt{1+2P_{\,\text{ave}}}} \right) \,,\,
\end{align}
which is strictly less than the upper bound given in (\ref{Ineq.UB_Lapidoth}) for all $P_{\,\text{ave}} > 0$ and tends to $\frac{1}{2} \log (1 + P_{\,\text{ave}})$ for $P_{\,\text{ave}} \to \infty$. However, recently, authors in \cite{Cheraghchi19} yielded the best known upper bound on the capacity $\mathbb{C}\left(0, P_{\,\text{ave}}, \infty \right)$ in the absence of dark current, i.e., $\lambda = 0$ for any $P_{\,\text{ave}}$ outside the asymptotic regime $P_{\,\text{ave}} \to 0$, as follows
\begin{align}
    \mathbb{C} \left( 0, P_{\,\text{ave}}, \infty \right) \leq P_{\,\text{ave}} \ln \hspace{-.3mm} \left( \frac{1 + \left(1 + e^{1+\gamma} \right) P_{\,\text{ave}}+ 2 P_{\,\text{ave}}^2 }{e^{1+\gamma}P_{\,\text{ave}} \hspace{-.2mm} + \hspace{-.2mm} 2P_{\,\text{ave}}^2}\right) \hspace{-.5mm} + \ln \hspace{-.5mm} \left( \hspace{-.3mm} 1 \hspace{-.3mm} + \hspace{-.3mm} \frac{1}{\sqrt{2e}} \hspace{-.3mm} \left( \sqrt{\frac{1 + \left(1 + e^{1+\gamma} \right) P_{\,\text{ave}}+ 2 P_{\,\text{ave}}^2}{1 + P_{\,\text{ave}}}} - 1 \hspace{-.3mm} \right) \hspace{-.2mm} \right)
\end{align}
where $\gamma \approx 0.5772$ is the Euler-Mascheroni constant. This bound outperforms the previously best upper bound given in (\ref{Eq.Martinez}) and recovers that result with a rigorous proof. Further, in \cite[Th.~1]{Cheraghchi20}, the best upper bound on the capacity $\mathbb{C}\left(\lambda, P_{\,\text{ave}}, \infty \right)$ for a positive dark current term, i.e., $\lambda \geq 0$ is given by
\begin{align}
    \mathbb{C}\left(\lambda, P_{\,\text{ave}}, \infty \right) \leq \ln\left( \delta_{\lambda} + \frac{1}{\sqrt{2e}} \left( \frac{1}{\sqrt{1 - q_{\lambda,P_{\,\text{ave}}}}} - 1 \right) \right) - \left( P_{\,\text{ave}} + \lambda \right)\ln q_{\lambda,P_{\,\text{ave}}}
\end{align}
where $\delta_{\lambda} = e^{-\lambda e^{\lambda} E_1\left(\lambda\right)}$, with $E_1(z)$ being the exponential integral function and $q_{\lambda,P_{\,\text{ave}}}$ given by
\begin{align}
    q_{\lambda,P_{\,\text{ave}}} = 1 - \frac{1}{1 + e^{1+\gamma}\left( P_{\,\text{ave}} + \lambda \right) + \frac{2 - e^{1+\gamma}}{1 + P_{\,\text{ave}} + \lambda} \left(P_{\,\text{ave}} + \lambda\right)^2}
\end{align}
where $\gamma$ is the Euler-Mascheroni constant.

Poisson channel under an average power constraint alone is considered in \cite{Brady90,Brady90_PhD_Diss} where it was assumed that both $P_{\,\text{ave}}$ and $\lambda$ tend to infinity while their ratio, $\frac{P_{\,\text{ave}}}{\lambda}$, defined as SNR, is kept constant. Namely, for every $\epsilon > 0$ capacity is lower-bounded by
%
\begin{align}
     \mathbb{C}\left(\lambda, P_{\,\text{ave}}, \infty \right) \geq \frac{1}{2} \log \frac{P_{\,\text{ave}}}{2\pi} - \frac{1}{2}\log\left( 1 + \frac{1}{\text{SNR}} \right) - \epsilon  \,,\,
\end{align}
and upper-bounded by
\begin{align}
     \mathbb{C}\left(\lambda, P_{\,\text{ave}}, \infty \right) \leq \frac{1}{2} \log \frac{P_{\,\text{ave}}}{2\pi} + \log \left( \sqrt{\text{SNR}} \left( 1 + \frac{1}{P_{\epsilon}} \right) + \frac{1}{\sqrt{\text{SNR}}} \right) + 1 + \log \frac{3}{2} + \epsilon \;,\,
\end{align}
where the $P_{\epsilon} < P_{\,\text{ave}}$ is a large constant depending on $\epsilon$ but strictly less that the average power constraint $P_{\,\text{ave}}$. 

An upper bound on the capacity of a point-to-point DTPC $W(y|x)$ under an average power constraint $P_{\,\text{ave}}$ and dark current $\lambda$ was proposed in \cite[Example~2]{Aminian15} where it was shown that the Shannon capacity satisfies
\begin{align}
    \mathbb{C}\left( \lambda, P_{\,\text{ave}}, \infty \right) & = \max_{X\;:\,\mathbb{E}[X] \leq P_{\,\text{ave}}} I(X,Y)
    \nonumber\\
    & \leq \max_{X\;:\,\mathbb{E}[X] \leq P_{\,\text{ave}}} \text{Cov} \left( X+\lambda \,,\, \log \left( X+\lambda \right) \right) \,,\,
\end{align}
with $Y \sim \text{Pois} (\lambda + X)$ and $\text{Cov}(X,Y) = \mathbb{E}\left[XY\right] - \mathbb{E}\left[X\right]\mathbb{E}\left[Y\right]$.
Further, for a DTPC with an average input constraint $P_{\,\text{ave}}$ and peak power constraint $P_{\,\text{max}}$, the following upper bound was reported
\begin{align}
    \mathbb{C}\left( \lambda, P_{\,\text{ave}}, P_{\,\text{max}} \right) & = \max_{\substack{ X\;:\, \mathbb{E}[X] \leq P_{\,\text{ave}} \,,\ \\ 0 \leq X \leq P_{\,\text{max}}}} I(X,Y)
    \nonumber\\
    & \leq
    \begin{cases} \frac{P_{\,\text{ave}}}{P_{\,\text{max}}}(P_{\,\text{max}} - P_{\,\text{ave}})\log(\frac{P_{\,\text{max}}}{\lambda}+1) \qquad & \qquad P_{\,\text{ave}} \leq \frac{P_{\,\text{max}}}{2} \;,\, \\ \frac{P_{\,\text{max}}}{4} \log(\frac{P_{\,\text{max}}}{\lambda}+1) \qquad & \qquad P_{\,\text{ave}} \geq \frac{P_{\,\text{max}}}{2} \;.\, 
    \end{cases}
\end{align}
The capacity of the direct-detection Poisson photon-counting channel with fading (Poisson fading channel) is established in \cite{Chakraborty07} where a single-letter characterization of the capacity, assuming a perfect CSI at the receiver is provided with perfect and no CSI at the transmitter. Also, the limiting behavior of the capacity in the high and low peak-signal-to-dark-noise ratio (SNR) regimes, namely in the limits as $\lambda \to 0$ and $\lambda \to \infty$ is addressed. The capacity for perfect CSI at the transmitter is reported to be
\begin{align}
    \mathbb{C}\left( \lambda, P_{\,\text{ave}}, P_{\,\text{max}} \right) = \max_{\substack{\mu\;:\, \mathbb{R}_0^* \to [0,1] \,,\ \\ \mathbb{E}\left[ \mu(S) \leq \sigma \right]}} \mathbb{E} \left[ \mu(S) \zeta\left(S\alpha, \lambda \right) - \zeta \left(\mu(S) S \alpha, \lambda \right) \right] \,,\,
\end{align}
where $\zeta(x,y) \coloneqq (x+y) \ln (x+y) - y\ln y$ for $x,y > 0$ with $0\ln 0 \coloneqq 0$. The $0 \leq \sigma \leq 1$ is the ratio of average to peak power constraint, and $S$ is a distribution satisfying following conditions
\begin{align}
    \Pr \left[ S > 0 \right] = 1 \,,\, \\
    \mathbb{E} \left[ S \right] < \infty \,,\, \\
    \mathbb{E} \left[ |\zeta(S\alpha, \lambda)| \right] < \infty \,.\,
\end{align}
Finally, the capacity for no CSI at the transmitter is given by
\begin{align}
    \mathbb{C}\left( \lambda, P_{\,\text{ave}}, P_{\,\text{max}} \right) = \max_{0 \leq \mu \leq \sigma} \mathbb{E} \left[ \mu \zeta\left(S\alpha, \lambda \right) - \zeta \left(\mu S \alpha, \lambda \right) \right] \,.\,
\end{align}
\section{Main Result}
We develop lower and upper bounds on the achievable identification rates for the Poisson channel. Our DI capacity theorem is stated below.
\begin{theorem}
\label{Th.PDICapacity}
The DI capacity of the DTPC $\W$ subject to an average power constraint of $\frac{1}{n} \sum_{t=1}^n u_{i,t} \leq P_{\,\text{ave}}$ and a peak power constraint of $0 < u_{i,t} \leq P_{\,\text{max}}$ in the super exponential scale, i.e., $L(n,R)=2^{(n\log n)R}$, is bounded by
\begin{align}
    \frac{1}{4} \leq \mathbb{C}_{DI}(\W,L) \leq \frac{3}{2} \,.\;
\end{align}
\end{theorem}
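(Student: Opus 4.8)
The plan is to prove the two inequalities separately, both by sphere-packing in the input space, placing every codeword in the cube $\Q_{\f0}(n,A)$ with $A=\min(P_{\,\text{max}},P_{\,\text{ave}})$, so that the peak and average constraints \eqref{Ineq.Peak_Power_Const}--\eqref{Ineq.Ave_Power_Const} hold for free.

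\textbf{Lower bound.} Fix $\lambda_1,\lambda_2>0$. I would take the codebook $\U=\{\fu_i\}$ to be a maximal subset of $\Q_{\f0}(n,A)$ with pairwise $\ell_2$-distance at least $\rho_n=\kappa\,n^{1/4}$, where $\kappa=\kappa(\lambda,P_{\,\text{max}},\lambda_1,\lambda_2)$ is a large constant to be fixed. Maximality forces the balls $\S_{\fu_i}(n,\rho_n)$ to cover $\Q_{\f0}(n,A)$, so
\[
|\U|\ \ge\ \frac{A^n}{\mathrm{vol}\,\S_{\f0}(n,\rho_n)}\ =\ \frac{A^n\,\Gamma(n/2+1)}{\pi^{n/2}\,\rho_n^{\,n}}\ \ge\ 2^{(1/4-o(1))\,n\log n},
\]
using Stirling's formula and $\rho_n^{\,n}=\kappa^n n^{n/4}$; hence for every $R<1/4$ and $n$ large one has $|\U|\ge L(n,R)$, and it remains to equip $\U$ with decoding sets. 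With $\mu_{i,t}:=\lambda+u_{i,t}$ and $S_i:=\sum_t\mu_{i,t}$, I would set
\[
\D_i=\Big\{\fy\in\mathbb{N}_0^{\,n}:\ \textstyle\sum_t(y_t-\mu_{i,t})^2\le S_i+c_1\sqrt n\ \text{ and }\ \sum_t y_t\ge S_i-c_1\sqrt n\Big\},
\]
with $c_1$ a large constant. For the type-I error, $\mathbb E\sum_t(Y_t-\mu_{i,t})^2=\mathbb E\sum_t Y_t=S_i$ and both sums have variance $O(n)$ (for $Y\sim\text{Pois}(\mu)$ the fourth central moment is $\mu+3\mu^2$), so Chebyshev gives $P_{e,1}(i)\le\lambda_1$. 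For the type-II error, with $\fY\sim W^n(\cdot|\fu_i)$ and query $j\ne i$, I would expand
\[
\textstyle\sum_t(Y_t-\mu_{j,t})^2=\sum_t(Y_t-\mu_{i,t})^2+2\sum_t(Y_t-\mu_{i,t})(u_{i,t}-u_{j,t})+\|\fu_i-\fu_j\|^2 .
\]
When $|S_i-S_j|=O(\sqrt n)$, the separation term $\|\fu_i-\fu_j\|^2\ge\rho_n^2=\kappa^2\sqrt n$ dominates the $O(\sqrt n)$-fluctuations of the other two terms, so the quadratic test in $\D_j$ rejects; when $S_j-S_i\gg\sqrt n$, the linear test in $\D_j$ rejects because $\sum_t Y_t$ concentrates near $S_i$. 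Choosing $c_1$, then $\kappa$, large enough makes $P_{e,2}(i,j)\le\lambda_2$ for all $i\ne j$, giving $\mathbb{C}_{DI}(\W,L)\ge 1/4$.

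\textbf{Upper bound.} Let $(\U,\mathscr D)$ be an $(L(n,R),n,\lambda_1,\lambda_2)$ code with $\lambda_1+\lambda_2<1$. From \eqref{Eq.GTypeIError}--\eqref{Eq.GTypeIIError}, for $i\ne j$ we have $W^n(\D_i|\fu_i)\ge 1-\lambda_1$ and $W^n(\D_i|\fu_j)\le\lambda_2$, hence
\[
\bigl\|W^n(\cdot|\fu_i)-W^n(\cdot|\fu_j)\bigr\|_{\mathrm{TV}}\ \ge\ 1-\lambda_1-\lambda_2\ =:\ \eta\ >\ 0 .
\]
I would then bound this total variation letter-by-letter: by subadditivity over product measures together with the coupling estimate $\|\text{Pois}(\lambda+u)-\text{Pois}(\lambda+u')\|_{\mathrm{TV}}\le 1-e^{-|u-u'|}\le|u-u'|$, one obtains
\[
\eta\ \le\ \sum_{t}|u_{i,t}-u_{j,t}|\ \le\ n\max_t|u_{i,t}-u_{j,t}|\ \le\ n\,\|\fu_i-\fu_j\| ,
\]
so the letter-wise ratios $\mu_{i,t}/\mu_{j,t}$ cannot all lie within $O(1/n)$ of $1$, and in particular the codewords are pairwise $\ell_2$-separated by at least $\delta_n:=\eta/n$. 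The balls $\S_{\fu_i}(n,\delta_n/2)$ are then disjoint and lie inside a cube of edge $P_{\,\text{max}}+\delta_n$, so
\[
L(n,R)\ \le\ \frac{(P_{\,\text{max}}+\delta_n)^n}{\mathrm{vol}\,\S_{\f0}(n,\delta_n/2)}\ =\ \frac{(P_{\,\text{max}}+\delta_n)^n\,\Gamma(n/2+1)}{\pi^{n/2}\,(\delta_n/2)^n}\ \le\ 2^{(3/2+o(1))\,n\log n},
\]
by Stirling and $\delta_n\asymp 1/n$; therefore $R\le 3/2$, which together with the lower bound proves Theorem~\ref{Th.PDICapacity}.

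\textbf{Main obstacle.} In the achievability the delicate point is the type-II analysis: a low-power codeword must not be confused with a high-power one, which a pure second-moment decoder cannot prevent and which is exactly what forces the extra (one-sided) $\ell_1$ constraint in $\D_i$; one then has to verify that $c_1$ and $\kappa$ can be chosen consistently so that the type-I and type-II bounds hold at once. In the converse the only real work is the letter-wise total-variation estimate and the Stirling bookkeeping that converts $\delta_n\asymp 1/n$ into the exponent $3/2$; a sharper estimate keeping $\|\fu_i-\fu_j\|_1$ instead of passing to the worst letter would lower the constant, but $3/2$ already suffices for the stated theorem.
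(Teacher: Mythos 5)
Your proof is correct, and both directions diverge from the paper's argument in substantive ways.

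\textbf{Achievability.} You use the same sphere packing in $\Q_{\f0}(n,A)$ with radius $\Theta(n^{1/4})$ and the same volume/Stirling count, but your decoder is not the paper's. The paper's decoding set tests only the centered quadratic statistic $\frac{1}{n}\sum_t\big[(y_t-\mu_{j,t})^2-\mu_{j,t}\big]$ with $\mu_{j,t}=\lambda+u_{j,t}$, and because the Poisson variance moves with the mean this statistic alone cannot separate a codeword from a uniformly shifted one: for $\fu_j=\fu_i+\mathbf{1}$ one has $\mathbb{E}_{\fu_i}\big[\sum_t(Y_t-\mu_{j,t})^2\big]=\sum_t\mu_{i,t}+\|\fu_i-\fu_j\|^2=\sum_t\mu_{j,t}$, so under the wrong hypothesis the statistic is still centered at zero and the type~II error does not vanish. (This is visible in the paper's derivation as the additive $\sum_t\mu_{j,t}$ term that should appear on the right-hand side of the event $\E_1$ but is dropped.) Your added one-sided $\ell_1$ test $\sum_t y_t\ge S_j-c_1\sqrt{n}$ is precisely what rules out that confusion pair, and your two-case argument (linear test handles $S_j-S_i\gg\sqrt n$; quadratic test handles $|S_j-S_i|=O(\sqrt n)$ via the $\kappa^2\sqrt n$ separation) with $c_1$ chosen before $\kappa$ is the correct way to close both error bounds simultaneously. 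So you are not restating the paper's decoder; you are strengthening it in a way that the analysis actually requires.

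\textbf{Converse.} The paper proves a letter-wise-ratio lemma by directly bounding the ratio of Poisson product kernels on a high-probability output set, yielding $|1-v_{i_2,t}/v_{i_1,t}|\ge\epsilon'_n$ for some coordinate $t$ with $\epsilon'_n\asymp n^{-(1+b)}$, hence a minimum codeword distance $\lambda\epsilon'_n$. You instead note that the decoding regions certify $\|W^n(\cdot|\fu_i)-W^n(\cdot|\fu_j)\|_{\mathrm{TV}}\ge 1-\lambda_1-\lambda_2$, then apply tensorization of total variation and the one-line Poisson coupling $\|\mathrm{Pois}(\mu)-\mathrm{Pois}(\mu')\|_{\mathrm{TV}}\le 1-e^{-|\mu-\mu'|}\le|\mu-\mu'|$ to get $\|\fu_i-\fu_j\|_1\ge\eta$ and hence an $\Omega(1/n)$ $\ell_2$-separation. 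Both routes finish with the same sphere-packing count and both give $R\le 3/2$; your coupling route is shorter and more modular (any per-letter TV estimate plugs in). Your closing remark is also correct: keeping $\|\fu_i-\fu_j\|_1\ge\eta$ and using Cauchy--Schwarz gives $\|\fu_i-\fu_j\|_2\ge\eta/\sqrt n$, which would improve the converse to $R\le 1$, strictly better than the paper's $3/2$.
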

Hence, the DI capacity is infinite in the exponential scale  and zero in the double exponential scale, i.e.,
\begin{align}
    \label{Eq.GDICapacityFast2}
    \mathbb{C}_{DI}(\W,L) = 
    \begin{cases}
    \infty & \text{ for 
	$L(n,R)=2^{nR}$} \,,\, \\
	0&\text{ for 
	$L(n,R)=2^{2^{nR}}$} \,.\,
\end{cases}
\end{align}
The proof of Theorem~\ref{Th.PDICapacity} is given below. The second part of the theorem is a direct consequence of the arguments in \cite[Lem.~3]{Salariseddigh_arXiv_ITW}.
\begin{proof}[Achievability Proof] 
Consider the Poisson channel $\W$. We show achievability using a packing of hyper spheres and a distance-decoder. We pack hyper spheres with radius $\sim n^{\frac{1}{4}}$ inside a larger hyper cube. While the radius of the small spheres in our previous derivation for the Gaussian channels vanishes \cite{Salariseddigh_arXiv_ITW}, the radius here diverges to infinity with $n$. Yet, we can obtain a positive rate while packing a super-exponential numbers of small spheres. A DI code for the Poisson channel $\W$ is constructed as follows.
\subsubsection*{Codebook construction}
\label{Subsec.CodebookConstruction_1}
Observe that if $P_{\,\text{ave}} \geq P_{\,\text{max}}$, then the hypercube is fully inscribed within the hypersphere, hence, the setup in the presence of both average and peak power constraints as given in (\ref{Ineq.Peak_Power_Const}) and (\ref{Ineq.Ave_Power_Const}) is reduced to only the peak power constraint. For the case $P_{\,\text{ave}} < P_{\,\text{max}}$, we define
\begin{align}
    \label{Eq.A}
    A = \min \left(P_{\,\text{ave}},P_{\,\text{max}} \right) \,,\,
\end{align}
since by $0 < x_t \leq A$ for all $t \in [\![n]\!]$, both power constraints are met, namely $\frac{1}{n} \sum x_t \leq P_{\,\text{ave}}$ and $0 < x_t \leq P_{\,\text{max}}$ for all $t \in [\![n]\!]$. Hence, we restrict ourselves to a hypercube with edge length $A$. We use a packing arrangement of non-overlapping hyperspheres of radius $r_0 = \sqrt{n\epsilon_n}$ in a hypercube with edge length $A$, where
\begin{align}
    \epsilon_n = \frac{A}{n^{\frac{1}{2}(1-b)}} \,,\,
\end{align}
and $b>0$ is arbitrarily small.

Let $\mathscr{S}$ denotes a sphere packing, i.e., an arrangement of $L$ non-overlapping spheres $\S_{\fu_i}(n,r_0)$, for $i\in [\![L(n,R)]\!]$, with radius $r_0 = \sqrt{n\epsilon_n} = n^{\frac{1}{4}(1+b)} \sqrt{A}$, that cover a larger cube $\Q_{\f0}(n,A)$ with an edge length $A$ (see Figure~\ref{Fig.Density}).

As opposed to standard sphere packing coding techniques, the small spheres are not necessarily entirely contained within the larger cube. That is, we only require that the centers of spheres are inside $\Q_{\f0}(n,A)$ and are disjoint from each other and have a non-empty intersection with volume $\text{Vol}\left[ \Q_{\f0}(n,A))\right]$.

The packing density $\Delta_n(\mathscr{S})$ is defined as the fraction of the larger cube volume $\text{Vol}\left[\Q_{\f0}(n,A)\right]$ that is covered by the small spheres (see \cite[Ch.~1]{CHSN13}), i.e.,
\begin{align}
    \Delta_n(\mathscr{S}) \triangleq \frac{\text{Vol}\left(\Q_{\f0}(n,A)\cap\bigcup_{i=1}^{L}\S_{\fu_i}(n,r_0)\right)}{\text{Vol}\left[\Q_{\f0}(n,A)\right]} \,.\,
    \label{Eq.DensitySphereFast}
\end{align}
A sphere packing is called \emph{saturated} if no spheres can be added to the arrangement without overlap.
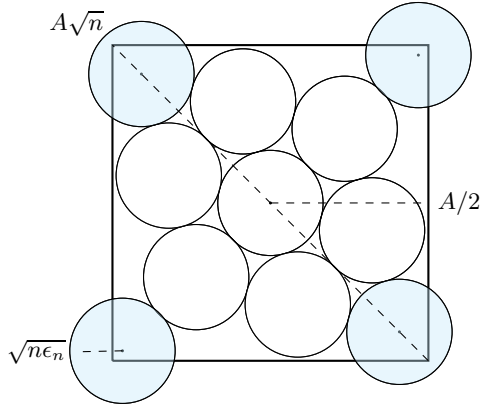
\begin{figure}[H]
    \centering
	\scalebox{1}{
\begin{tikzpicture}[scale=.7][thick]
\foreach \s in {3}
{
\draw [thick] (-\s,-\s) -- (\s,-\s) -- (\s,\s) -- (-\s,\s) -- (-\s,-\s);
}
\draw (-1.41,-1.41) circle (1cm);
\draw [fill=white, fill opacity=0.3] (-1.41,-1.41) circle (1cm);
\draw (0,0) circle (1cm);
\draw [] (0,0) circle (1cm);
\draw (1.41,1.41) circle (1cm);
\draw [] (1.41,1.41) circle (1cm);
\draw (+.52,-1.93) circle (1cm);
\draw [fill=white, fill opacity=0.5] (+.52,-1.93) circle (1cm);
\draw (1.93,-.52) circle (1cm);
\draw [] (1.93,-.52) circle (1cm);
\draw (-1.93,+.52) circle (1cm);
\draw [fill=white, fill opacity=0.5] (-1.93,.52) circle (1cm);
\draw (-.52,1.93) circle (1cm);
\draw [] (-.52,1.93) circle (1cm);
\node [fill=black, shape=circle, inner sep=.4pt] ($.$) at (2.45,-2.45)  {};
\draw (2.45,-2.45) circle (1cm);
\draw [fill=cyan!20!white, fill opacity=0.4] (2.45,-2.45) circle (1cm);
\node [fill=black, shape=circle, inner sep=.4pt] ($.$) at (2.81,2.81)  {};
\draw (2.81,2.81) circle (1cm);
\draw [fill=cyan!20!white, fill opacity=0.4] (2.81,2.81) circle (1cm);
\node [fill=black, shape=circle, inner sep=.4pt] ($.$) at (-2.45,2.45) {};
\draw (-2.45,2.45) circle (1cm);
\draw [fill=cyan!20!white, fill opacity=0.4] (-2.45,2.45) circle (1cm);
\node [fill=black, shape=circle, inner sep=.4pt] ($.$) at (-2.81,-2.81) {};
\draw (-2.81,-2.81) circle (1cm);
\draw [fill=cyan!20!white, fill opacity=0.4] (-2.81,-2.81) circle (1cm);
\node [fill=black, shape=circle, inner sep=.4pt] ($.$) at (0,0) {};
\draw [dashed] (0,0) -- (3,0) node [right,font=\small] {$A/2$};
\draw [dashed] (-2.81,-2.81) -- (-3.71,-2.83) node [left,font=\small] {$\sqrt{n\epsilon_n}$};
\draw [dashed] (3,-3) -- (-3,3) node [above left,font=\small] {$A\sqrt{n}$};
\end{tikzpicture}}
	\caption{Illustration of a saturated sphere packing inside a cube, where small spheres of radius $r_0 = \sqrt{n\epsilon_n}$ cover a larger cube of edge $r_1 = A$ and diameter $A\sqrt{n}$. The small spheres are disjoint from each other and have a non-empty intersection with the large cube. Some of the small spheres, marked in cyan, are not entirely contained within the larger cube, and yet they are considered to be a part of the packing arrangement with their center inside the cube. As we assign a codeword to each small sphere center, the $1$-norm and arithmetic mean of a codeword is bounded by $A$ as required.}
	\label{Fig.Density}
\end{figure}
In particular, we use a packing argument that has a similar flavor as that observed in the Minkowski--Hlawka theorem in lattice theory \cite{CHSN13}. We use the property that there exists an arrangement $\bigcup_{i=1}^{L} \S_{\fu_i}(n,\sqrt{n\epsilon_n})$ of non-overlapping spheres inside a $\Q_{\f0}(n,A)$, with a density of $\Delta_n(\mathscr{S})\geq 2^{-n}$ \cite[Lem.~2.1]{C10}.
Specifically, consider a saturated packing arrangement of $L(n,R)=2^{(n\log n)R}$ spheres with radius $r_0=\sqrt{n\epsilon_n}$ covering the larger cube $\Q_{\f0}(n,A)$ with edge $r_1=A$, i.e., such that no spheres can be added without overlap. Then, for such an arrangement, there cannot be a point in the larger cube $\Q_{\f0}(n,A)$ with a distance of more than $2r_0$ from all sphere centers. Otherwise, a new sphere could be added. As a consequence, if we double the radius of each sphere, the $2r_0$-radius spheres cover the whole hypercube of edge $r_1$. In general, the volume of a hypersphere of radius $r$ is given by
\begin{align}
    \text{Vol}\left(\S_{\fx}(n,r)\right) = \frac{\pi^{\frac{n}{2}}}{\Gamma(\frac{n}{2}+1)} \cdot r^{n} \,,\,
    \label{Eq.VolS}
\end{align}
(\cite[see Eq.~(16)]{CHSN13}).
Hence, doubling the radius multiplies the volume by $2^n$. Since the $2r_0$-radius spheres cover the entire hypercube of edge $r_1$, it follows that the original $r_0$-radius packing has a density of at least $2^{ -n}$, i.e.,
\begin{align}
    \Delta(\mathscr{S})\geq 2^{-n}.
    \label{Eq.MinkowskiDeltaFast}
\end{align}
We assign a codeword to the center $\fu_i$ of each small sphere.
The codewords satisfy the input constraint as
\begin{align*}
    0 < u_{i,t} \leq r_1 = A \,,\,
\end{align*}
for all $t \in [\![n]\!]$ and $i\in [\![L(n,R)]\!]$ which is equivalent to
\begin{align}
    \label{Ineq.Norm_Infinity}
    \norm{\fu_i}_{\infty} \leq A \;.\,
\end{align}
Since all the small spheres have volume equal to $\text{Vol}(\S_{\fu_1}(n,r_0))$ and cover the whole cube, the total number of spheres is bounded from below by
\begin{align}
    L&=\frac{\text{Vol}\left(\bigcup_{i=1}^{L}\S_{\fu_i}(n,r_0\right)}{\text{Vol}(\S_{\fu_1}(n,r_0))}
    \nonumber\\&
    \geq\frac{\text{Vol}\left(\Q_{\f0}(n,A)\cap\bigcup_{i=1}^{L}\S_{\fu_i}(n,r_0)\right)}{\text{Vol}(\S_{\fu_1}(n,r_0))}
    \nonumber\\&
    =\frac{\Delta(\mathscr{S})\cdot
    \text{Vol}\left[\Q_{\f0}(n,A)\right]}{\text{Vol}(\S_{\fu_1}(n,\sqrt{n\epsilon_n}))}
    \nonumber\\&
    \geq 2^{-n}\cdot \frac{\text{Vol}\left[\Q_{\f0}(n,A)\right]}{\text{Vol}(\S_{\fu_1}(n,\sqrt{n\epsilon_n}))}
    \nonumber\\
    & = 2^{-n}\cdot \frac{A^n}{\text{Vol}(\S_{\fu_1}(n,\sqrt{n\epsilon_n}))} \,,\,
\end{align}
where the second equality is due to (\ref{Eq.DensitySphereFast}) and the inequality that follows holds by (\ref{Eq.MinkowskiDeltaFast}). Hence,
\begin{align}
    2^{(n\log n)R} \geq 2^{-n}\cdot \frac{A^n}{\text{Vol}(\S_{\fu_1}(n,\sqrt{n\epsilon_n}))} \,.\,
\end{align}
Thus,
\begin{align}
    R & \geq \frac{1}{n\log n} \cdot \log \left( \frac{A^n}{\text{Vol}\left(\S_{\fu_1}(n,r_0)\right)} \right)
    \nonumber\\
    & = \frac{1}{n\log n} \left[ n\log\left( \frac{A}{\sqrt{\pi} r_0 } \right)+\log \left(\frac{n}{2}! \right) \right]
    \nonumber\\
    & = \frac{1}{n\log n} \left[ o(n\log n) +  \frac{n}{2}\log n - n \log r_0 \right] \,.\,
\end{align}
Hence, for $r_0 = \sqrt{n\epsilon_n}$, we obtain
\begin{align*}
    R \geq \frac{1}{n\log n} \left[ o(n\log n) +  \frac{n}{2}\log n - \frac{1}{4}(1+b) \cdot n \log n \right] \,,\,
\end{align*}
which tends to $\frac{1}{4}$ when $n \to \infty$ and $b\rightarrow 0$.
\begin{remark}
\label{Rem.Positive_Rate}
We note that the ratio of the small spheres in our construction grows with $n$, as $\sim n^{\frac{1}{4}}$. 
It is well-known that the volume of an $n$-dimensional \emph{unit}-hypersphere, i.e., with a radius of $r_0=1$, tends to zero, as $n\to\infty$ \cite[Ch.~1, Eq.~(18)]{CHSN13}. 
Nonetheless, 
we observe that the volume still tends to zero for a radius $r_0 = n^c$, where  $0 < c < \frac{1}{2}$. In particular,
\begin{align}
    & \lim_{n\to\infty} \text{Vol}\left(\S_{\fu_1}(n,r_0)\right) = \lim_{n\to\infty}\frac{\pi^{\frac{n}{2}}}{\Gamma(\frac{n}{2}+1)}\cdot r_0^n
    \nonumber\\
    & = \lim_{n\to\infty} \frac{\pi^{\frac{n}{2}}}{\frac{n}{2}!}\cdot r_0^n
    = \lim_{n\to\infty} \left(\sqrt{\frac{2\pi}{n}}r_0\right)^n \,,\,
    \label{Eq.nSphere_Volume}
\end{align}
where the last equality follows by Stirling's approximation, 
\begin{align}
    \ln n! = n\ln n - n + o(n) \,.
\end{align} 
Observe that the last expression in (\ref{Eq.nSphere_Volume}) tends to zero, not only for a vanishing radius, but also for $r_0 = n^c$, where $0 < c < \frac{1}{2}$.
Whereas, the volume of the hypercube $\Q_{\f0}(n,A)$ with an edge of length $A$ tends to
\begin{align}
    \lim_{n\to\infty} \text{Vol}\left[\Q_{\f0}(n,A)\right] = \lim_{n\to\infty} A^n = \begin{cases} 0 & A < 1 \,,\, \\ 1 & A = 1 \,,\, \\ \infty & A > 1 \,.\, \end{cases}
\end{align}

As for the number of packed spheres, observe that the log-ratio of the volumes satisfies
\begin{align}
    \label{Eq.Volume_Ratio}
    & \log \left( \frac{\text{Vol}\left[\Q_{\f0}(n,A)\right]}{\text{Vol}\left(\S_{\fu_1}(n,r_0)\right)} \right) = \log \left( \frac{A^n}{\pi^{\frac{n}{2}}{} r_0^n} \cdot \frac{n}{2}!\right)
    \nonumber\\
    & = n\log\left( \frac{A}{\sqrt{\pi} r_0 } \right)+\log \left(\frac{n}{2}! \right)
    \nonumber\\
    & = n\log\left( \frac{A}{\sqrt{\pi}} \right) + \frac{n}{2}\log \frac{n}{2} - \frac{n}{2}\log e - n \log r_0 + o(n)
    \nonumber\\
    & = o(n\log n) +  \frac{1}{2} n\log n  - n \log r_0
    \,.\,
\end{align}
Hence, for  $r_0 = n^c$, 
we obtain
\begin{align}
    R & \geq \frac{1}{n\log n} \log \left( \frac{\text{Vol}\left[Q_{\f0}(n,A)\right]}{\text{Vol}\left(\S_{\fu_1}(n,r_0)\right)} \right)
    \nonumber\\
    & = \frac{1}{n\log n} \left[\frac{1}{2} n\log n  - c \, n \log n  + o(n\log n) \right] \,,\,
\end{align}
which tends to $\frac{1}{2} - c$ for every value of $A$.


\end{remark}

\subsubsection*{Encoding}
Given a message $i\in [\![L(n,R)]\!]$, transmit $\fx=\fu_i$.
\subsubsection*{Decoding}
Let
\begin{align}
    \delta_n = \frac{A}{3n^{\frac{1}{2}(1-b)}} \,.\,
\end{align}
To identify whether a message $j\in \M$ was sent, the decoder checks whether the channel output $\mathbf{y}$ belongs to the following decoding set, 
%
\begin{align}
    \D_j & = \left\{ \fy \in \Y^n \;:\, \left| \frac{1}{n} \sum_{t=1}^n \left[ \left( y_t - \left( u_{j,t} + \lambda \right) \right)^2 - \left( \lambda + u_{j,t} \right) \right] \right| \leq \delta_n \right\} \,.\,
    \label{Eq.Decoding_Set}
\end{align}
\subsubsection*{Error Analysis}
Consider the type I error, i.e., when the transmitter sends $\fu_i$, yet $\fY\notin\D_i$. For every $i\in[\![L(n,R)]\!]$, the type I error probability is bounded by
\begin{align}
    P_{e,1}(i) & = \Pr \left( \left| \frac{1}{n}\sum_{t=1}^n \left[ \left( Y_t - \left( u_{i,t} + \lambda \right) \right)^2 - \left( \lambda + u_{i,t} \right) \right] \right| > \delta_n \, \Big| \, \fx = \fu_i \right) \;.\,
    \label{Eq.TypeIError}
\end{align}
Let $Y_t(i)\sim\text{Pois}(\lambda+u_{i,t})$ denote the channel output at time $t$ given that $\fx=\fu_i$.
Now, we derive the expectation as follows,
\begin{align}
     \mathbb{E} \left\{ \frac{1}{n} \sum_{t=1}^n \left[ \left( Y_t(i) - \left( u_{i,t} + \lambda \right) \right)^2 - \left( \lambda + u_{i,t} \right) \right] \right\}
    & = \frac{1}{n} \sum_{t=1}^n \left[ \text{var} \left\{ Y_t(i) \right\} - \left( u_{i,t} + \lambda \right) \right]
    \nonumber\\
    & = 0 \,.\,
    \label{Eq.Expectation}
\end{align}
Hence, the expectation of the random variable within the absolute value in (\ref{Eq.TypeIError}) is zero.
Next, we compute the variance as follows
\begin{align}
    \text{var}\left\{ \frac{1}{n} \sum_{t=1}^n  \left[ \left( Y_t(i) - \left( u_{i,t} + \lambda \right) \right)^2 - \left( \lambda + u_{i,t} \right) \right] \right\}
     & = \frac{1}{n^2} \sum_{t=1}^n \text{var} \left\{ \left( Y_t(i) - \left( u_{i,t} + \lambda \right) \right)^2 \right\}
     \nonumber\\
     & \leq \frac{1}{n} \mathbb{E} \left[ \left( Y_t(i) - \left( u_{i,t} + \lambda \right) \right)^4 \right] \,,\,
    \label{Ineq.4thMoment}
\end{align}
where the equality holds because the channel is memoryless and inequality follows by $\text{var}\{Z\} \leq \mathbb{E}\{Z^2\}$. The moment-generating function (MGF) of a Poisson variable
$Z\sim\text{Pois}(\lambda_Z)$ is given by $G_Z(\alpha) = e^{\lambda_Z(e^{\alpha}-1)}$. Hence, 
for $X=Z-\lambda_Z$, MGF turns to be $G_X(\alpha) = e^{\lambda_Z (e^{\alpha} - 1 - \alpha)}$. Therefore, we have 
\begin{align}
    \mathbb{E}\{ X^4 \} & =
    \frac{d^4}{d\alpha^4}G_X(\alpha)
    \bigg|_{\alpha=0}\nonumber\\
    & = \lambda_Z \left( \lambda_Z^3 e^{3\alpha} + 6\lambda_Z^2 e^{2\alpha} + 7\lambda_Z e^{\alpha} + 1 \right) e^{\alpha + \lambda_Z e^{\alpha} - \lambda_Z}
    \bigg|_{\alpha = 0} \nonumber\\
    & = \lambda_Z^4 + 6 \lambda_Z^3 + 7\lambda_Z^2 + \lambda_Z 
    \nonumber\\
    & \leq 7\lambda_Z^4 + 7\lambda_Z^3 + 7\lambda_Z^2 + 7\lambda_Z
    \nonumber\\
    & = 7 \left( \lambda_Z^4 + \lambda_Z^3 + \lambda_Z^2 + \lambda_Z \right)
    \nonumber\\
    & = 7 \left( \left( \lambda + u_{i,t} \right)^4 + \left( \lambda + u_{i,t} \right)^3 + \left( \lambda + u_{i,t} \right)^2 + \left( \lambda + u_{i,t} \right) \right)
    \nonumber\\
    & \leq 7 \left( \left( \lambda + A \right)^4 + \left( \lambda + A \right)^3 + \left( \lambda + A \right)^2 + \left( \lambda + A \right) \right)
    \,.\,
\end{align}
%
Thereby,
\begin{align}
    \text{var}\left\{ \frac{1}{n}\sum_{t=1}^n \left[ \left( Y_t(i) - \left( u_{i,t} + \lambda \right) \right)^2 - \left( \lambda + u_{i,t} \right) \right] \right\} \leq \frac{7}{n} \left( \left( \lambda + A \right)^4 + \left( \lambda + A \right)^3 + \left( \lambda + A \right)^2 + \left( \lambda + A \right) \right) \,.\,
\end{align}

Now by (\ref{Eq.Expectation}) and (\ref{Ineq.4thMoment}), we can rewrite the type I error given in (\ref{Eq.TypeIError}) as follows
\begin{align}
    P_{e,1}(i) &= \Pr \left( \left| \frac{1}{n} \sum_{t=1}^n \left( Y_t(i) - \left( u_{i,t} + \lambda \right) \right)^2 - \left( \lambda + u_{i,t} \right) \right| > \delta_n \right)
    \label{Ineq.TypeIError}
     \nonumber\\
     & \leq \frac{7 \left( \left( \lambda + A \right)^4 + \left( \lambda + A \right)^3 + \left( \lambda + A \right)^2 + \left( \lambda + A \right) \right)}{n\delta_n^2}
    \nonumber\\
     & = \frac{63 \left( \left( \lambda + A \right)^4 + \left( \lambda + A \right)^3 + \left( \lambda + A \right)^2 + \left( \lambda + A \right) \right)}{A^2 n^b}
     \nonumber\\
     & \leq \lambda_1 \,,\,
\end{align}
for sufficiently large $n$ and arbitrarily small $\lambda_1>0$, where the first inequality follows by Chebyshev's inequality.

Next, we address type II errors, i.e., when $\fY\in\D_j$ while the transmitter sent $\fu_i$.
Then, for every $i,j\in[\![L(n,R)]\!]$, where $i\neq j$, the type II error probability is given by
\begin{align}
    P_{e,2}(i,j) = \Pr \left( \left| \frac{1}{n}\sum_{t=1}^n \left[ \left( Y_t(i) - \left( u_{j,t} + \lambda \right) \right)^2 - \left( \lambda + u_{j,t} \right) \right] \right| \leq \delta_n \right) \;.\,
    \label{Eq.Pe2G}
\end{align}
Then,
\begin{align}
    P_{e,2}(i,j) = \Pr\left( \left| \frac{1}{n} \sum_{t=1}^n \left[ \left( Y_t(i) - \left( u_{i,t} + \lambda \right) + \left( u_{i,t} - u_{j,t} \right) \right)^2 - \left( \lambda + u_{j,t} \right) \right] \right| \leq \delta_n \right) \;.\,
\end{align}
Observe that the squared argument inside the absolute value can be expressed as
\begin{align}
    & \sum_{t=1}^n \left( Y_t(i) - \left( \lambda + u_{i,t} \right) + \left( u_{i,t} - u_{j,t} \right) \right)^2
    \nonumber\\
    & = \norm{ \fY(i) - \left( \lambda \boldsymbol{1} + \fu_i \right)}^2 + \norm{\fu_i - \fu_j}^2 + 2\sum_{t=1}^n \left( u_{i,t} - u_{j,t} \right) \left( Y_t(i) - \left( \lambda + u_{i,t} \right) \right) \;,\,
     \label{Eq.Pe2norm}
\end{align}
where $\mathbf{1} = \left(1,1, \cdots,1 \right)$. Then, define the event
\begin{align}
    \E_0 = \left\{\left| \sum_{t=1}^n \left( u_{i,t} - u_{j,t} \right) \left( Y_t(i) - \left( \lambda + u_{i,t} \right) \right) \right| > \frac{n\delta_n}{2} \right\} \;.\,
\end{align}
By Chebyshev's inequality, the probability of this event vanishes,
\begin{align}
    \Pr(\E_0) &\leq 
    \frac{4\sum_{t=1}^n (u_{i,t}-u_{j,t})^2\cdot\text{var}\{Y_t(i))
    \}}{n^2\delta_n^2}
    \nonumber\\
    & = \frac{4\sum_{t=1}^n(u_{i,t}-u_{j,t})^2\cdot(\lambda+u_{i,t})}{n^2\delta_n^2}
    \nonumber\\
    & \leq \frac{4(\lambda+A)\sum_{t=1}^n(u_{i,t}-u_{j,t})^2}{n^2\delta_n^2}
    \nonumber\\
    & = \frac{4(\lambda+A)\norm{\fu_i-\fu_j}^2}{n^2\delta_n^2} \,.\,
    \label{Eq.PeE0G}
\end{align}
Observe that
\begin{align}
    \norm{\fu_i - \fu_j}^2 & \stackrel{(a)}{\leq} \left(\norm{\fu_i} + \norm{\fu_j}\right)^2
    \nonumber\\
    & \stackrel{(b)}{\leq} \left(\sqrt{n} \norm{\fu_i}_{\infty} + \sqrt{n} \norm{\fu_j}_{\infty} \right)^2
    \nonumber\\
    & \stackrel{(c)}{\leq} \left(\sqrt{n} A + \sqrt{n} A \right)^2 
    \nonumber\\
    & = 4nA^2 \;,\
\end{align}
where $(a)$ holds by the triangle inequality, $(b)$ follows since $\norm{.} \leq \sqrt{n} \norm{.}_{\infty}$ and $(c)$ is valid by (\ref{Ineq.Norm_Infinity}). Hence
\begin{align}
    \Pr(\E_0) & \leq \frac{4(\lambda+A)\norm{\fu_i-\fu_j}^2}{n^2\delta_n^2}
    \nonumber\\
    & \leq \frac{16n(\lambda+A)A^2}{n^2\delta_n^2}
    \nonumber\\
    & = \frac{16(\lambda+A)A^2}{n\delta_n^2} 
    \nonumber\\
    & = \frac{144(\lambda+A)}{n^b}
    \nonumber\\
    & \leq \zeta_1 \,,\,
\end{align}
for sufficiently large $n$, where $\zeta_1 > 0$ is arbitrarily small. Furthermore, observe that given the complementary event $\E_0^c$, we have
\begin{align}
    \label{Eq.E_0_Result}
    2\sum_{t=1}^n \left( u_{i,t} - u_{j,t} \right) \left( Y_t(i) - \left( \lambda + u_{i,t} \right) \right) \geq -n\delta_n \,.\,
\end{align}
Therefore, the event $\E_0^c$, the type II error event in (\ref{Eq.Pe2G}), and the identity in (\ref{Eq.Pe2norm}) together imply that the following event occurs,
\begin{align}
\E_1
&=\left\{ \norm{\fY(i)-(\lambda\boldsymbol{1}+\fu_i)}^2 + \norm{\fu_i-\fu_j}^2 \leq 2n\delta_n \right\} \,.\,
\end{align}
Applying the law of total probability, we have
\begin{align}
    P_{e,2}(i,j) & \stackrel{(a)}{=}
   \Pr \left( \left\{ \left| \frac{1}{n} \sum_{t=1}^n \left[ \left( Y_t(i) - \left( u_{i,t} + \lambda \right) + \left( u_{i,t} - u_{j,t} \right) \right)^2 - \left( \lambda + u_{j,t} \right) \right] \right| \leq \delta_n \right\} \cap \E_0 \right)
    \nonumber\\
    & + \Pr \left( \left\{ \left| \frac{1}{n}\sum_{t=1}^n \left[ \left( Y_t(i) - \left( u_{i,t} + \lambda \right) + \left( u_{i,t} - u_{j,t} \right) \right)^2 - \left( \lambda + u_{j,t} \right) \right] \right| \leq \delta_n \right\} \cap \E_0^c \right)
    \nonumber\\
    &\stackrel{(b)}{\leq}
    \zeta_1 + \Pr\left(\E_1 \right) \,,\,
    \label{Eq.Pe2_P_Expanded}
\end{align}
where $(a)$ is due to (\ref{Eq.Pe2G}) and $(b)$ holds since each probability is bounded by $1$.

Based on the codebook construction, each codeword is surrounded by a sphere of radius $\sqrt{n\epsilon_n}$ thus
\begin{align}
     n\epsilon_n \leq \norm{\fu_i-\fu_j}^2 \,.\,
\end{align}
Thus, reconsidering (\ref{Eq.Pe2_P_Expanded}) we obtain the type II error upper-bounded as follows
\begin{align}
    \label{Eq.TypeIIErrorAnalysis}
    P_{e,2}(i,j)
    & \leq
    \Pr\left( \norm{\fY(i) - \left( \lambda \boldsymbol{1} + \fu_i \right)}^2 \leq n \left( 2\delta_n - \epsilon_n \right) \right) + \zeta_1
    \nonumber\\
    & =
    \Pr\left( \norm{\fY(i) - \left( \lambda \boldsymbol{1} + \fu_i \right)}^2 \leq -\frac{n\epsilon_n}{3} \right) + \zeta_1
    \nonumber\\
    & = 0 + \zeta_1 \,,\,
 \end{align}
where the first equality holds since $\delta_n = \frac{\epsilon_n}{3}$.

We have thus shown that for every $\lambda_1,\lambda_2>0$ and sufficiently large $n$, there exists an $(L(n,R), n, \lambda_1, \lambda_2)$ code.
%
\end{proof}
\subsection{Upper Bound (Converse Proof for Th.~\ref{Th.PDICapacity})}
\label{Subsec.ConvFast}
The derivation for the DTPC is more involved than the Gaussian derivation as in \cite{Salariseddigh_arXiv_ITW}. Instead of establishing a minimum distance between the codewords as before, we use the letter-wise ratio.

We show that the capacity is bounded by $\mathbb{C}_{DI}(\W,L)\leq \frac{3}{2}$. In our previous work on fading channels \cite{Salariseddigh_ITW,Salariseddigh_arXiv_ITW}, the converse proof was based on establishing a minimum distance between each pair of codewords. Here, on the other hand, we use the stronger requirement that the letter-wise ratio for each pair is distanced from $1$.

Suppose that $R$ is an achievable rate in the $L$-scale for the Poisson channel. Consider a sequence of $(L(n,R), n, \lambda_1, \allowbreak \lambda_2)$ codes $(\U^{(n)},\D^{(n)})$ such that $\lambda_1^{(n)}$ and $\lambda_2^{(n)}$ tend to zero as $n\rightarrow\infty$. Given a codebook $\U^{(n)}=\{\fu_i\}_{i\in [\![ L(n,R) ]\!]}$, we define the shifted codewords $\fv_i$ by
\begin{align}
    v_{i,t} = \lambda + u_{i,t} \;,\,
\end{align}
for $t \in [\![ n ]\!]$.

We begin with the following lemma on the letter-wise ratio for every pair of codewords.
\begin{lemma}
\label{Lem.DConverseFast}
Consider a sequence of codes  as described above. Then, given a sufficiently large $n$, the codebook 
$\U^{(n)}$ satisfies the following property.
For every pair of codewords, $\fv_{i_1}$ and $\fv_{i_2}$, there exists at least one letter $t \in [\![n]\!]$ such that 
\begin{align}
    \label{Eq.Converse_Lem}
    \left|1-\frac{v_{i_2,t}}{v_{i_1,t}}\right| > \epsilon'_n \,,\,
\end{align}
for all $i_1,i_2\in [\![L(n,R)]\!]$, such that $i_1\neq i_2$,
with
\begin{align}
\label{Eq.epsilonn_p}
  \epsilon'_n = \frac{P_{\,\text{max}}}{n^{1+b}} \,,\,
\end{align}
where $b>0$ is arbitrarily small.
\end{lemma}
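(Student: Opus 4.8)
The plan is to prove this by contradiction, following the standard "if codewords are too close, the decoder cannot distinguish them" paradigm that underlies identification converses. Suppose that for some pair $i_1\neq i_2$ the conclusion fails, i.e. for \emph{every} letter $t\in[\![n]\!]$ we have $\left|1-\frac{v_{i_2,t}}{v_{i_1,t}}\right|<\epsilon'_n$. I would then show that the two output distributions $W^n(\cdot|\fu_{i_1})$ and $W^n(\cdot|\fu_{i_2})$ — which are products of Poisson laws with means $v_{i_1,t}$ and $v_{i_2,t}$ respectively — are so close (in total variation, or via a Kullback--Leibler / chi-squared bound) that no pair of decoding sets $\D_{i_1},\D_{i_2}$ can simultaneously satisfy the type~I bound $P_{e,1}(i_1)\le\lambda_1^{(n)}$ and the type~II bound $P_{e,2}(i_1,i_2)\le\lambda_2^{(n)}$ once $n$ is large, because $\lambda_1^{(n)}+\lambda_2^{(n)}<1$ eventually. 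This gives the contradiction and forces \eqref{Eq.Converse_Lem}.

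**The key computational step** is to bound the divergence between two product Poisson measures in terms of the letter-wise ratios. For a single coordinate, $D\big(\mathrm{Pois}(v_{i_1,t})\,\|\,\mathrm{Pois}(v_{i_2,t})\big) = v_{i_2,t}-v_{i_1,t}+v_{i_1,t}\log\frac{v_{i_1,t}}{v_{i_2,t}}$, and writing $v_{i_2,t}=v_{i_1,t}(1+\delta_t)$ with $|\delta_t|<\epsilon'_n$, a second-order Taylor expansion gives $D \le C\,v_{i_1,t}\,\delta_t^2 \le C\,v_{i_1,t}\,(\epsilon'_n)^2$ for a universal constant $C$ (valid for small $\epsilon'_n$). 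Summing over $t$ and using the peak constraint $v_{i_1,t}=\lambda+u_{i_1,t}\le \lambda+P_{\,\text{max}}$ yields
\begin{align}
D\big(W^n(\cdot|\fu_{i_1})\,\|\,W^n(\cdot|\fu_{i_2})\big) \le C\,n\,(\lambda+P_{\,\text{max}})\,(\epsilon'_n)^2 = C\,(\lambda+P_{\,\text{max}})\,\frac{P_{\,\text{max}}^2}{n^{1+2b}} \longrightarrow 0 \,.
\end{align}
By Pinsker's inequality the total variation distance between the two output distributions is then at most $\sqrt{\tfrac12 D}\to 0$, so in particular it is eventually $<1-\lambda_1^{(n)}-\lambda_2^{(n)}$... and this is exactly what I need.

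**The final step** converts this closeness into the impossibility of identification. For any candidate decoding region $\D_{i_1}$ one has
\begin{align}
1-\lambda_1^{(n)}-\lambda_2^{(n)} \le W^n(\D_{i_1}|\fu_{i_1}) - W^n(\D_{i_1}|\fu_{i_2}) \le \big\|W^n(\cdot|\fu_{i_1}) - W^n(\cdot|\fu_{i_2})\big\|_{\mathrm{TV}} \,,
\end{align}
where the first inequality uses $P_{e,1}(i_1)\le\lambda_1^{(n)}$ and $P_{e,2}(i_1,i_2)\le\lambda_2^{(n)}$. Since the right-hand side tends to $0$ while the left-hand side tends to $1$, we obtain a contradiction for all sufficiently large $n$, uniformly over the pair $(i_1,i_2)$ since the divergence bound did not depend on the codeword choice beyond the peak constraint.

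**The main obstacle** I anticipate is making the Taylor-expansion bound on the per-letter KL divergence fully rigorous and uniform: one must check that $v_{i_1,t}$ is bounded \emph{below} away from $0$ (it is, by $v_{i_1,t}\ge\lambda>0$, so $\log$ is well-behaved) and control the remainder of $x\mapsto x\log\frac{1}{1+\delta}+\delta x$ uniformly for $|\delta|\le\epsilon'_n$ with $\epsilon'_n\to 0$; the factor $v_{i_1,t}\le\lambda+P_{\,\text{max}}$ then makes the sum $O(n(\epsilon'_n)^2)=O(n^{-1-2b})$, which is what drives everything to zero. An alternative that avoids $\log$ entirely is to bound the chi-squared divergence $\chi^2\big(\mathrm{Pois}(v_{i_1,t})\,\|\,\mathrm{Pois}(v_{i_2,t})\big) = e^{(v_{i_1,t}-v_{i_2,t})^2/v_{i_2,t}}-1$ and use $D\le\log(1+\chi^2)$; either way the estimate is routine once the exponent $1+b$ in $\epsilon'_n$ is seen to give a summable-to-zero bound. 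A subtlety worth a remark is that the exponent here ($n^{1+b}$) is more permissive than one might naively expect, reflecting that we only need the divergence — not a union bound over exponentially many pairs — to vanish, but in fact the total number of pairs is at most $L(n,R)^2 = 2^{2(n\log n)R}$, which does \emph{not} overwhelm a bound of the form $e^{-cn^{\epsilon}}$; however since our divergence bound is merely $o(1)$ rather than exponentially small, the argument must be applied \emph{per fixed pair} and the contradiction obtained pairwise, which is legitimate because the statement is itself a pairwise assertion about the codebook.
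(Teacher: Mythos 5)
Your proof is correct and takes a genuinely different route from the paper. The paper proves the lemma by a direct pointwise comparison of the two output laws: it writes $W^n(\fy|\fu_{i_1})-W^n(\fy|\fu_{i_2})=W^n(\fy|\fu_{i_1})\bigl[1-e^{-\sum(v_{i_2,t}-v_{i_1,t})}\prod(v_{i_2,t}/v_{i_1,t})^{y_t}\bigr]$, bounds each factor separately via $|v_{i_2,t}-v_{i_1,t}|\le\epsilon'_n v_{i_1,t}$ and $v_{i_2,t}/v_{i_1,t}\ge 1-\epsilon'_n$, and then applies Bernoulli's inequality plus a Taylor expansion; this requires restricting $\fy$ to an auxiliary "typical" output set $\B_{i_1}$ (controlling $\sum_t y_t$) and absorbing the complement by Chebyshev. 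You instead bound the KL divergence of the product Poisson laws and invoke Pinsker, which dispenses with the typical set entirely and replaces the pointwise manipulation with a one-line information-theoretic estimate $1-\lambda_1^{(n)}-\lambda_2^{(n)}\le\|W^n(\cdot|\fu_{i_1})-W^n(\cdot|\fu_{i_2})\|_{\mathrm{TV}}\le\sqrt{D/2}$. Beyond being cleaner, your route is quantitatively sharper: the paper's separate bounding of the two exponential factors discards the leading-order cancellation in the log-likelihood ratio, so it needs $n\epsilon'_n\to 0$ (hence $\epsilon'_n\sim n^{-1-b}$), whereas the per-letter KL satisfies $D_t=v_{i_1,t}(\delta_t-\log(1+\delta_t))\le v_{i_1,t}\delta_t^2$, so $D\le n(\lambda+P_{\,\text{max}})(\epsilon'_n)^2$ and you only need $n(\epsilon'_n)^2\to 0$, i.e.\ $\epsilon'_n\sim n^{-1/2-b}$ would already suffice. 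You did not push this, but if you had, the subsequent sphere-packing step with $r_0=\lambda\epsilon'_n\sim n^{-1/2-b}$ gives $R\le\tfrac12+(\tfrac12+b)+o(1)\to 1$ rather than $\tfrac32$, i.e.\ a strictly tighter converse bound than the one the paper states in Theorem~\ref{Th.PDICapacity}. Your closing remark about union bounds is correct but immaterial: the lemma is a pairwise assertion, so the argument is applied per pair and no union bound is needed.
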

\begin{proof}[Converse Proof]
    Fix $\lambda_1$ and $\lambda_2$. Let $\kappa, \delta > 0$ be arbitrarily small.
    Assume to the contrary that 
    there exist two messages $i_1$ and $i_2$, where $i_1\neq i_2$, such that
    \begin{align}
        \label{Ineq.Converse_Lem_Complement}
        \left|1-\frac{v_{i_2,t}}{v_{i_1,t}}\right| \leq \epsilon'_n \;,\,
    \end{align}
    for all $t\in[\![n]\!]$.

    In order to show contradiction, we will bound the sum of the two error probabilities, $P_{e,1}(i_1)+P_{e,2}(i_2,i_1)$, from below. To this end, define 
   \begin{align}
       \B_{i_1} = \left\{\fy \in \D_{i_1} \,:\,
       \frac{1}{n}\sum_{t=1}^n y_t \leq  \lambda + P_{\,\text{max}} + \delta \right\} \;.\,
   \end{align}
    Then, observe that
    \begin{align}
    \label{Eq.Error_Sum_1}
    P_{e,1}(i_1)+P_{e,2}(i_2,i_1)
    & = 1- \sum_{\fy\in\D_{i_1}} W^n \left( \fy \, \big| \, \fu_{i_1} \right) + \sum_{\fy\in\D_{i_1}} W^n \left( \fy \, \big| \, \fu_{i_2} \right)
    \nonumber\\
    & \geq 1- \sum_{\fy\in\D_{i_1}} W^n \left( \fy \, \big| \, \fu_{i_1} \right) + \sum_{\fy\in\D_{i_1} \cap \B_{i_1}} W^n \left( \fy \, \big| \, \fu_{i_2} \right) \,.\,
    \end{align}
    
    Now, consider the first sum in (\ref{Eq.Error_Sum_1}),
    \begin{align}
        \label{Ineq.Error_I_Complement}
        \sum_{\fy\in\D_{i_1}} W^n \left( \fy \, \big| \, \fu_{i_1} \right) & = \sum_{\fy\in\D_{i_1}\cap\B_{i_1}} W^n \left( \fy \, \big| \, \fu_{i_1} \right) + \sum_{\fy \in \D_{i_1}\cap\B_{i_1}^c} W^n \left( \fy \, \big| \, \fu_{i_1} \right)
        \nonumber\\
        & \leq \sum_{\fy \in \D_{i_1}\cap\B_{i_1}} W^n \left( \fy \, \big| \, \fu_{i_1} \right) + \Pr\left( \frac{1}{n} \sum_{t=1}^n Y_t > \lambda + P_{\,\text{max}} + \delta \, \bigg| \, \fx = \fu_{i_1} \right)
    \end{align}
    Now we focus on the probability in the right hand side. By subtracting expectation from each term inside the argument and applying the Chebyschev's inequality, we can bound this probability from above as follows
    \begin{align}
        & \Pr \left( \frac{1}{n} \sum_{t=1}^n Y_t -
        \left[ \lambda + \frac{1}{n} \sum_{t=1}^n u_{i_1,t}
        \right] > P_{\,\text{max}} + \delta - \frac{1}{n} \sum_{t=1}^n u_{i_1,t} \, \bigg| \, \fx = \fu_{i_1} \right)
         \leq \frac{\text{var} \left[ n^{-1} \sum_{t=1}^n Y_t \, \big| \, \fx =\fu_{i_1} \right] }{\left( P_{\,\text{max}} + \delta - n^{-1} \sum_{t=1}^n u_{i_1,t} \right)^2}
        \;.\,
        \label{Ineq.ErrorI_Complement1}
   \end{align}
Observe that due to the peak input constraint, $P_{\max}-n^{-1}\sum_{t=1}^n u_{i_1,t}\geq 0$. Hence, the denominator is bounded from below by
\begin{align}
    \left(P_{\,\text{max}} + \delta - \frac{1}{n}\sum_{t=1}^n u_{i_1,t}\right)^2\geq \delta^2 \,.
    \label{Ineq.ErrorI_Complement2}
\end{align}
Furthermore, the nominator satisfies
\begin{align}
  \text{var} \left[ \frac{1}{n} \sum_{t=1}^n Y_t~|~\fx=\fu_{i_1} \right]
  &=\frac{1}{n^2} \sum_{t=1}^n\text{var} \left[  Y_t \, \big| \, \fx =\fu_{i_1} \right]
  \nonumber\\
   &=\frac{1}{n^2} \sum_{t=1}^n ( \lambda+ u_{i_1,t} )
   \nonumber\\
   &\leq \frac{1}{n} ( \lambda+ P_{\max} ) \,.
   \label{Ineq.ErrorI_Complement3}
\end{align}
Therefore, by (\ref{Ineq.ErrorI_Complement1})-(\ref{Ineq.ErrorI_Complement3}),
\begin{align}
        \Pr \left( \frac{1}{n} \sum_{t=1}^n Y_t -
        \left[ \lambda + \frac{1}{n} \sum_{t=1}^n u_{i_1,t}
        \right] > P_{\,\text{max}} + \delta - \frac{1}{n} \sum_{t=1}^n u_{i_1,t} \, \bigg| \, \fx = \fu_{i_1} \right)
        & \leq \frac{ \lambda + P_{\,\text{max}}}{n\delta^2}
        \nonumber\\
        & \leq \kappa \,. 
\end{align}
Hence, reconsidering (\ref{Ineq.Error_I_Complement}) we have
\begin{align}
     \sum_{\fy\in\D_{i_1}} W^n \left( \fy \, \big| \, \fu_{i_1} \right) & \leq \sum_{\fy \in \D_{i_1}\cap\B_{i_1}} W^n \left( \fy \, \big| \, \fu_{i_1} \right) + \kappa \;.\,
\end{align}
Returning to the sum of error probabilities in (\ref{Eq.Error_Sum_1}), the last bound yields
\begin{align}
    \label{Eq.Error_Sum_2}
    P_{e,1}(i_1)+P_{e,2}(i_2,i_1)
     & \geq 
    1 - \sum_{\fy \in \D_{i_1} \cap \B_{i_1}} \left[ W^n \left( \fy \, \big| \, \fu_{i_1} \right) - W^n \left( \fy \, \big| \, \fu_{i_2} \right) \right] - \kappa \,.\,
\end{align}
    Now let us focus on the summand in the square brackets in (\ref{Eq.Error_Sum_2}).  By the channel law  in (\ref{Eq.Poisson_Channel_Law}),
    \begin{align}
        \label{Ineq.Cond_Channel_Diff}
         W^n \left( \fy \, \big| \, \fu_{i_1} \right) - W^n \left( \fy \, \big| \, \fu_{i_2} \right) &=
         \prod_{t=1}^n \frac{e^{-v_{i_1,t}}v_{i_1,t}^{y_t}}{y_t!} - \prod_{t=1}^n \frac{e^{-v_{i_2,t}}v_{i_2,t}^{y_t}}{y_t!} 
        \nonumber\\
        & = e^{-\sum_{t=1}^n v_{i_1,t}} \left[  \prod_{t=1}^n \frac{v_{i_1,t}^{y_t}}{y_t!} - e^{-\sum_{t=1}^n (v_{i_2,t} - v_{i_1,t})} \prod_{t=1}^n \frac{v_{i_2,t}^{y_t}}{y_t!} \right] \,.
    \end{align}
    We note that by (\ref{Ineq.Converse_Lem_Complement}), we have $\left| v_{i_2,t} - v_{i_1,t} \right| \leq \epsilon'_n v_{i_1,t}$, which implies
    \begin{align}
        1 - \epsilon'_n \leq \frac{v_{i_2,t}}{v_{i_1,t}} \;.\,
    \end{align}
    Hence,
    \begin{align}
        \label{Ineq.Cond_Channel_Diff_Continuation}
        W^n \left( \fy \, \big| \, \fu_{i_1} \right) - W^n \left( \fy \, \big| \, \fu_{i_2} \right)
        & \leq e^{-\sum_{t=1}^n v_{i_1,t}} \cdot \prod_{t=1}^n \frac{v_{i_1,t}^{y_t}}{y_t!} \left[ 1 - e^{-\sum_{t=1}^n \epsilon'_n v_{i_1,t}} \prod_{t=1}^n \left( 1 - \epsilon'_n \right)^{y_t} \right]
        \nonumber\\
        & =  W^n \left( \fy \, \big| \, \fu_{i_1} \right) \left[ 1 - e^{-\sum_{t=1}^n \epsilon'_n v_{i_1,t}} \prod_{t=1}^n \left( 1 - \epsilon'_n \right)^{y_t} \right]\,.
    \end{align}
    This can then be written as 
    \begin{align}
         \label{Ineq.Cond_Channel_Diff2}
         W^n \left( \fy \, \big| \, \fu_{i_1} \right) - W^n \left( \fy \, \big| \, \fu_{i_2} \right) & \leq 
         W^n \left( \fy \, \big| \, \fu_{i_1} \right) \cdot \left[1 - e^{-\epsilon'_n \sum_{t=1}^n v_{i_1,t}} \cdot \left( 1 - \epsilon'_n \right)^{\sum_{t=1}^n y_t} \right] 
        \nonumber\\
        & \leq \kappa \cdot W^n \left( \fy \, \big| \, \fu_{i_1} \right) \,.
    \end{align}
    The second inequality holds because of the following.
    Since $v_{i_1,t} \leq \lambda + P_{\,\text{max}}$ for all $t\in[\![n]\!]$, and since $\fy\in\B_{i_1}$ satisfies $\sum_{t=1}^n y_t \leq n \left( \lambda + P_{\max} + \delta \right)$, we have
       \begin{align}
            e^{- \epsilon'_n\sum_{t=1}^n v_{i_1,t}} \cdot \left( 1 - \epsilon'_n \right)^{\sum_{t=1}^n y_t} & \geq 
            e^{-n\epsilon'_n \left( \lambda + P_{\,\text{max}} \right)} \cdot \left( 1 - \epsilon'_n \right)^{n \left( \lambda + P_{\,\text{max}} + \delta \right)}
            \nonumber\\
            & = e^{n\epsilon'_n\delta} \cdot e^{-n\epsilon'_n \left( \lambda + P_{\,\text{max}} + \delta \right)} \cdot \left( 1 - \epsilon'_n \right)^{n \left( \lambda + P_{\,\text{max}} + \delta \right)}
            \nonumber\\
            & \geq e^{n\epsilon'_n \delta} \cdot f(n\epsilon'_n)
            \nonumber\\
            &\geq f(n\epsilon'_n) \,,\,
              \label{Ineq.Cond_Channel_Diff3}
    \end{align}
    where
    \begin{align}
        f(x) = e^{-cx}(1-x)^c \;,\,
    \end{align}
    with $c = \lambda + P_{\,\text{max}}+\delta$. In (\ref{Ineq.Cond_Channel_Diff3}). The third inequality in (\ref{Ineq.Cond_Channel_Diff3}) follows by Bernoulli's inequality \cite[see Ch.~3]{Mitrinovic13}.
    Observe that by using the Taylor expansion, we have
    \begin{align}
        f(x) = 1-2cx + \mathcal{O}(x^2) \;.\,
    \end{align}
    Therefore, for a sufficiently small values of $x$, 
     \begin{align}
        f(x) \geq 1-3cx \,.
    \end{align}
    Then, the bound in  (\ref{Ineq.Cond_Channel_Diff3}) becomes
    \begin{align}
        e^{-\epsilon'_n\sum_{t=1}^n v_{i_1,t}} \cdot \left( 1 - \epsilon'_n \right)^{\sum_{t=1}^n y_t} & \geq 1 - 3 \left( \lambda + P_{\,\text{max}} + \delta \right) n\epsilon'_n
        \nonumber\\
        & = 1 - \frac{3 \left( \lambda + P_{\,\text{max}} + \delta \right)}{n^b} 
        \nonumber\\
        & \geq 1 - \kappa \,,\,
        \label{Ineq.Cond_Channel_Diff4}
    \end{align}
    for sufficiently large $n$, where the equality holds by (\ref{Eq.epsilonn_p}). The last bound implies (\ref{Ineq.Cond_Channel_Diff2}).
    
    Thereby, (\ref{Eq.Error_Sum_2}), (\ref{Ineq.Cond_Channel_Diff}) and (\ref{Ineq.Cond_Channel_Diff2}) together yield
    \begin{align}
        P_{e,1}(i_1) + P_{e,2}(i_2,i_1)
        & \geq 1 - \sum_{\fy\in\B_{i_1}} \hspace{-2mm} \left[ W^n \left( \fy \, \big| \, \fu_{i_1} \right) - W^n \left( \fy \, \big| \, \fu_{i_2} \right) \right] - \kappa \nonumber \\ & = 1 - \sum_{\fy \in \B_{i_1}} \left[ \kappa \cdot W^n \left( \fy \, \big| \, \fu_{i_1} \right) \right] - \kappa
        \nonumber\\
        & \geq 1 - 2\kappa \;.\,
    \end{align}
    Clearly, this is a contradiction since the error probabilities tend to zero as $n\rightarrow\infty$. Thus, the assumption in (\ref{Ineq.Converse_Lem_Complement}) is false. This completes the proof of Lemma~\ref{Lem.DConverseFast}.
\end{proof}
Next, we use Lemma~\ref{Lem.DConverseFast} in order to prove the upper bound on the DI capacity.
Observe that since
\begin{align}
    v_{i,t} & = u_{i,t} + \lambda
    \nonumber\\
    & > \lambda \;,\,
\end{align}
Lemma~\ref{Lem.DConverseFast} implies
\begin{align}
     \left| u_{i_1,t} - u_{i_2,t} \right| & = \left| v_{i_1,t} - v_{i_2,t} \right| 
     \nonumber\\
     & \geq \epsilon'_n v_{i_1,t}
     \nonumber\\
     & > \lambda \epsilon'_n \;.\,   
\end{align}
We deduce that the distance between every pair of codewords satisfies
\begin{align}
   \norm{\fu_{i_1} - \fu_{i_2}} & > \left| u_{i_1,t} - u_{i_2,t} \right|
   \nonumber\\
   & > \lambda \epsilon'_n \,.\,
\end{align}
Thus, we can define an arrangement of non-overlapping spheres $\S_{\fu_i}(n,\lambda \epsilon'_n)$, i.e., spheres of radius $\lambda \epsilon'_n$ that are centered at the codewords $\fu_i$.
Since the codewords all belong to a hypercube $\Q_{\f0}(n,P_{\,\text{max}})$ with edge $P_{\,\text{max}}$, it follows that the number of packed small spheres, i.e., the number of codewords $L(n,R) = 2^{(n\log n)R}$, is bounded by
\begin{align}
    2^{(n\log n)R} \leq \frac{P_{\,\text{max}}^n}{\text{Vol}(\S_{\fu_1}(n,\lambda \epsilon'_n))} \,.\,
\end{align}
Hence,
\begin{align*}
    R & \leq \frac{1}{n\log n} \cdot \log \left( \frac{P_{\,\text{max}}^n}{\text{Vol}\left(\S_{\fu_1}(n,\lambda \epsilon'_n)\right)} \right)
    \nonumber\\
    & = \frac{1}{n\log n} \left[ n\log\left( \frac{P_{\,\text{max}}}{\sqrt{\pi} r_0 } \right)+\log \left(\frac{n}{2}! \right) \right]
    \nonumber\\
    & = \frac{1}{n\log n} \left[ o(n\log n) +  \frac{1}{2}n\log n - n \log r_0 \right] \,.\,
\end{align*}
Hence, for $r_0 = \lambda \epsilon'_n = \frac{\lambda P_{\,\text{max}}}{n^{1+b}}$, we obtain
\begin{align}
    R \leq \frac{1}{n\log n} \left[ o(n\log n) +  \frac{1}{2}n\log n + (1+b)\, n \log n \right] \,,\,
\end{align}
which tends to $\frac{3}{2}$ as $n \to \infty$ and $b \to 0$.  This completes the proof of Theorem~\ref{Th.PDICapacity}.
\qed
\section{Summary and Discussion}
\label{Sec.SummaryDiscussions}

We have developed lower and upper bounds on the deterministic identification (DI) capacity of a DTPC subject to average and peak power constraints, in the scale of $L(n,R)=2^{n\log(n)R}=n^{nR}$, where $n$ is the blocklength. We have thus determined that the super-exponential scale $n^{nR}$ is the appropriate scale for the DI capacity of the Poisson channel. This scale is sharply different from the ordinary scales in the transmission and randomized-identification settings, where the code size scales exponentially and double exponentially, respectively. Different non-standard scales are also observed in other communication models, such as \emph{covert identification} \cite{ZT20,BGT13}, where the code size scales as $2^{2^{\sqrt{n}R}}$.

We have mentioned molecular communication as a motivating application for this study. Recently, there have been significant advances in molecular communication for complex nano-networks. The Internet of Things incorporates smart devices, which can be accessed and controlled via the Internet \cite{Atzori10}. The advances in nanotechnology contribute to the development of devices in the nanoscale range, referred as nanothings. The interconnection of nanothings with the Internet is known as Internet of NanoThings (IoNT) and is the basis for various future healthcare and military applications \cite{Akyildiz10,Dress15}. Nanothings are based on synthesized materials, using electronic circuits, and electromagnetic-based communication. Unfortunately, these characteristics could be harmful for some application environments, such as inside the human body. Furthermore, the concept of Internet of Bio-NanoThings (IoBNT) has been introduced in \cite{Aky15}, where nanothings are biological cells that are created using tools from synthetic biology and nanotechnology. Such biological nanothings are called bio-nanothings. Similar to artificial nanothings, bio-nanothings have control (cell nucleus), power (mitochondrion), communication (signal pathways), and sensing/actuation (flagella, pili or cilia) units. For the communication between cells, molecular communication is well suited, since the natural exchange of information between cells is already based on this paradigm. Molecular communication in cells is based on signal pathways (chains of chemical reactions) that process information that is modulated into chemical characteristics, such as molecule concentration. 

The Poisson channel is relevant for practical 6G networks in the context of MC \cite{Gohari16} and optical communications \cite{Cao13,Mceliece81}. The functionality of some 6G applications requires only a message identification. For instance, in some applications it is only required that an alert to be identified \cite{6G_PST}. In some other systems of molecular communication, a nano-device demands to determine the occurrence of a target event. In particular, in the context of MC, a nano-device might seek only to narrow down its knowledge regarding a specific task in terms of a reliable Yes/No answer. For these tasks, the identification problem is deemed as a key technology for 6G. For instance, during the targeted drug delivery \cite{Muller04,Nakano13} or cancer treatment \cite{Hobbs_ea98,Jain99,Wilhelm16}, nano-devices often should work in collaboration with each other to accomplish complex task about the treatment. This collaboration is break into simpler identification tasks such as to realize whether a certain signal or molecule was detected or not, to recognize whether a specific sort of drug is delivered or not, to verify whether a nano-machine has already replicated itself or not, to know whether the particle storage is empty or not, to discover whether the PH of blood is above a critical threshold or not, and to find out whether a target location in the vessels is identified or not and etc \cite{Nakano14}.

However, it is not clear how randomized identification (RI) codes can be incorporated into such systems. In the case of Bio-NanoThings, it is uncertain whether the natural biological processes can be controlled or reinforced by local randomness at this level. Therefore, for the design of synthetic IoNT, or for the analysis and utilization of IoBNT, identification with deterministic encoding is more appropriate.

For deterministic identification over the DTPC, the current research shows a different and unusual behavior compared to the traditional code size scales, i.e., exponential or double exponential. In the achievability, a counter-intuitive phenomenon regarding the radius of small packed hyperspheres has been observed. Surprisingly, this radius does not vanish and grows in the block-length, i.e., $\sim n^{\frac{1}{4}}$, nevertheless, the corresponding volume tends to zero more rapidly than the volume of the larger cube as $n \to \infty$. This allows us to pack super-exponentially many small hyperspheres inside the hypercube (with volume $A^n$). Observe that in the achievability part for the fading channels \cite{Salariseddigh_ITW,Salariseddigh_arXiv_ITW} the radius of the smaller spheres tends to zero in the block-length. Another interesting aspect is that even for the case of $A < 1$, which leads to a vanishing volume for the hypercube, still we are able to pack super-exponentially hyperspheres. In the converse part, we have used the argument that there is a minimum distance between the distinct codewords, in a similar manner as in the converse proof for fading channels. Nevertheless, a different upper bound is obtained.

We have considered deterministic identification for the DTPC without memory.
The DI capacity of the DTPC where $Y_t \sim \text{Pois}(\lambda + X_{t})$ subject to an average and peak power constraints $P_{\,\text{max}}$ and $P_{\,\text{ave}}$, is finite in the super exponential scale of $L(n,R)=2^{(n\log n)R}$, i.e,
\begin{align}
    \frac{1}{4}\leq\mathbb{C}_{DI}(\W,L) \leq  \frac{3}{2} \,.\;
\end{align}
The analysis for Poisson channels relies on a geometric considerations and using sphere packing. Based on fundamental properties of packing arrangements \cite{C10}, the optimal packing of non-overlapping spheres of radius $\sqrt{n\epsilon_n}$ contains an exponential number of spheres, and by decreasing the radius of the codeword spheres, the exponential rate can be made arbitrarily large. However, in the derivation of our lower bound for the $2^{n\log(n)}$-scale, we pack spheres of radius $\sqrt{n\epsilon_n}\sim n^{1/4}$, which results in $\sim 2^{\frac{1}{4}n\log(n)}$ codewords.

For the double exponential scale, or equivalently, when the rate is defined as  $R=\frac{1}{n}\log \log$ $(\#$ of messages$)$, the DI capacity is
 \begin{align}
    \mathbb{C}_{DI}(\W,L)=0  \;;\, \text{ for } L(n,R) = 2^{2^{nR}} \,,\,
 \end{align}
since the code size of DI codes scales only exponentially in block length. On the other hand, as observed by Bracher and Lapidoth \cite{BL17}, if one considers an average error criterion instead of the maximal error, then the double exponential performance of randomized-encoder codes can also be achieved using deterministic codes.

Our approach in the current research for finding positive finite bounds for the capacity was similar to our previous analysis for the Gaussian channels \cite{Salariseddigh_ITW,Salariseddigh_arXiv_ITW}. However, different analysis and bound were yielded. For instance, here in the achievability proof, we addressed packing of hyper spheres inside a hyper cube and surprisingly the radius of packed spheres scales as $\sim n^{\frac{1}{4}}$ which grows in the block-length, $n$, while the radius of small packed spheres for the Gaussian derivation tends to zero. We found that the volume for a hyper sphere of radius $\sim n^c$ vanishes for $0 < c < \frac{1}{2}$. Thereby we can pack a super-exponential number of spheres within the larger cube. The derivation in the converse proof were more involved and leaded to a larger upper bound compared to that of the Gaussian case.
\section*{Acknowledgments}
Salariseddigh, Pereg, and Deppe were supported by the BMBF \text{n.~16KIS1005 (LNT, NEWCOM)}, and Boche by the BMBF \text{n.~16KIS1003K (LTI, NEWCOM)}, and the national initiative for ``Molecular Communications" (MAMOKO) n.~16KIS0914. Schober was supported by MAMOKO n.~16KIS0913.
\appendices
\bibliographystyle{IEEEtran}
\bibliography{IEEEabrv,confs-jrnls,Lit}

\begin{thebibliography}{100}
\providecommand{\url}[1]{#1}
\csname url@samestyle\endcsname
\providecommand{\newblock}{\relax}
\providecommand{\bibinfo}[2]{#2}
\providecommand{\BIBentrySTDinterwordspacing}{\spaceskip=0pt\relax}
\providecommand{\BIBentryALTinterwordstretchfactor}{4}
\providecommand{\BIBentryALTinterwordspacing}{\spaceskip=\fontdimen2\font plus
\BIBentryALTinterwordstretchfactor\fontdimen3\font minus
  \fontdimen4\font\relax}
\providecommand{\BIBforeignlanguage}[2]{{%
\expandafter\ifx\csname l@#1\endcsname\relax
\typeout{** WARNING: IEEEtran.bst: No hyphenation pattern has been}%
\typeout{** loaded for the language `#1'. Using the pattern for}%
\typeout{** the default language instead.}%
\else
\language=\csname l@#1\endcsname
\fi
#2}}
\providecommand{\BIBdecl}{\relax}
\BIBdecl

\bibitem{Salariseddigh_arXiv_ITW}
\BIBentryALTinterwordspacing
M.~J. Salariseddigh, U.~Pereg, H.~Boche, and C.~Deppe, ``Deterministic
  identification over fading channels,'' \emph{arXiv:2010.10010}, 2020.
  [Online]. Available: \url{https://arxiv.org/pdf/2010.10010.pdf}
\BIBentrySTDinterwordspacing

\bibitem{S48}
C.~E. {Shannon}, ``A mathematical theory of communication,'' \emph{Bell Sys.
  Tech. J.}, vol.~27, no.~3, pp. 379--423, 1948.

\bibitem{AD89}
R.~{Ahlswede} and G.~{Dueck}, ``Identification via channels,'' \emph{IEEE
  Trans. Inf. Theory}, vol.~35, no.~1, pp. 15--29, 1989.

\bibitem{NMWVS12}
T.~{Nakano}, M.~J. {Moore}, F.~{Wei}, A.~V. {Vasilakos}, and J.~{Shuai},
  ``Molecular communication and networking: Opportunities and challenges,''
  \emph{IEEE Trans. Nanobiosci.}, vol.~11, no.~2, pp. 135--148, 2012.

\bibitem{FYECG16}
N.~{Farsad}, H.~B. {Yilmaz}, A.~{Eckford}, C.~{Chae}, and W.~{Guo}, ``A
  comprehensive survey of recent advancements in molecular communication,''
  \emph{IEEE Commun. Surveys Tuts.}, vol.~18, no.~3, pp. 1887--1919, 2016.

\bibitem{6G+}
W.~Haselmayr, A.~Springer, G.~Fischer, C.~Alexiou, H.~Boche, P.~A. Hoeher,
  F.~Dressler, and R.~Schober, ``Integration of molecular communications into
  future generation wireless networks,'' \emph{1st 6G Wireless Summit. IEEE,
  Levi, Finland}, 2019.

\bibitem{6G_PST}
J.~Cabrera, H.~Boche, C.~Deppe, R.~F. Schaefer, C.~Scheunert, and F.~H. Fitzek,
  ``6{G} and the {P}ost-{S}hannon {T}heory,'' in \emph{Shaping Future 6G
  Networks: Needs, Impacts and Technologies}, N.~O. Frederiksen and
  H.~Gulliksen, Eds.\hskip 1em plus 0.5em minus 0.4em\relax Hoboken, New
  Jersey, United States: Wiley-Blackwell, 2021.

\bibitem{Hobbs_ea98}
S.~K. Hobbs, W.~L. Monsky, F.~Yuan, W.~G. Roberts, L.~Griffith, V.~P.
  Torchilin, and R.~K. Jain, ``Regulation of transport pathways in tumor
  vessels: role of tumor type and microenvironment,'' \emph{Proc. Natl. Acad.
  Sci.}, vol.~95, no.~8, pp. 4607--4612, 1998.

\bibitem{Jain99}
R.~K. Jain, ``Transport of molecules, particles, and cells in solid tumors,''
  \emph{Annu. Biomed. Eng. Rev.}, vol.~1, no.~1, pp. 241--263, 1999.

\bibitem{Wilhelm16}
S.~Wilhelm, A.~J. Tavares, Q.~Dai, S.~Ohta, J.~Audet, H.~F. Dvorak, and W.~C.
  Chan, ``Analysis of nanoparticle delivery to tumours,'' \emph{Nat. Rev.
  Mater.}, vol.~1, no.~5, pp. 1--12, 2016.

\bibitem{Muller04}
R.~H. Muller and C.~M. Keck, ``Challenges and solutions for the delivery of
  biotech drugs--a review of drug nanocrystal technology and lipid
  nanoparticles,'' \emph{J. Biotech.}, vol. 113, no. 1-3, pp. 151--170, 2004.

\bibitem{Nakano13}
T.~Nakano, A.~W. Eckford, and T.~Haraguchi, \emph{Molecular
  Communication}.\hskip 1em plus 0.5em minus 0.4em\relax Cambridge University
  Press, 2013.

\bibitem{A78}
R.~Ahlswede, ``Elimination of correlation in random codes for arbitrarily
  varying channels,'' \emph{Zeitschrift f{\"u}r Wahrscheinlichkeitstheorie und
  verwandte Gebiete}, vol.~44, no.~2, pp. 159--175, 1978.

\bibitem{feedback}
R.~{Ahlswede} and G.~{Dueck}, ``Identification in the presence of feedback-a
  discovery of new capacity formulas,'' \emph{IEEE Trans. Inf. Theory},
  vol.~35, no.~1, pp. 30--36, Jan 1989.

\bibitem{correlation}
H.~{Boche}, R.~F. {Schaefer}, and H.~{Vincent Poor}, ``On the computability of
  the secret key capacity under rate constraints,'' in \emph{IEEE Int. Conf.
  Acoust. Speech Sig. Proc. (ICASSP)}, May 2019, pp. 2427--2431.

\bibitem{BV00}
M.~V. {Burnashev}, ``On identification capacity of infinite alphabets or
  continuous-time channels,'' \emph{IEEE Trans. Inf. Theory}, vol.~46, no.~7,
  pp. 2407--2414, 2000.

\bibitem{BD18_2}
H.~{Boche} and C.~{Deppe}, ``Secure identification for wiretap channels;
  robustness, super-additivity and continuity,'' \emph{IEEE Trans. Inf. Foren.
  Secur.}, vol.~13, no.~7, pp. 1641--1655, 2018.

\bibitem{W04}
A.~Winter, ``Quantum and classical message identification via quantum
  channels,'' \emph{arXiv preprint quant-ph/0401060}, 2004.

\bibitem{BL17}
A.~Bracher and A.~Lapidoth, ``Identification via the broadcast channel,''
  \emph{IEEE Trans. Inf. Theory}, vol.~63, no.~6, pp. 3480--3501, 2017.

\bibitem{S56}
C.~{Shannon}, ``The zero error capacity of a noisy channel,'' \emph{IRE Trans.
  Inf. Theory}, vol.~2, no.~3, pp. 8--19, 1956.

\bibitem{SCR20-2}
S.~Derebeyo{\u{g}}lu, C.~Deppe, and R.~Ferrara, ``Performance analysis of
  identification codes,'' \emph{Entropy}, vol.~22, no.~10, p. 1067, 2020.

\bibitem{VK93}
S.~{Verd\'u} and V.~K. {Wei}, ``Explicit construction of optimal
  constant-weight codes for identification via channels,'' \emph{IEEE Trans.
  Inf. Theory}, vol.~39, no.~1, pp. 30--36, 1993.

\bibitem{KT99}
K.~{Kurosawa} and T.~{Yoshida}, ``Strongly universal hashing and identification
  codes via channels,'' \emph{IEEE Trans. Inf. Theory}, vol.~45, no.~6, pp.
  2091--2095, 1999.

\bibitem{Bringer09}
J.~Bringer, H.~Chabanne, G.~Cohen, and B.~Kindarji, ``Private interrogation of
  devices via identification codes,'' in \emph{Int. Conf. Cryptol. in
  India}.\hskip 1em plus 0.5em minus 0.4em\relax Springer, 2009, pp. 272--289.

\bibitem{Bringer10}
------, ``Identification codes in cryptographic protocols,'' in \emph{IEEE Inf.
  Theory Workshop}, 2010, pp. 1--5.

\bibitem{MasterThesis}
W.~Labidi, ``Secure {I}dentification for {G}aussian {C}hannels,'' Master's
  thesis, LNT, Technical University of Munich (TUM), June 2019.

\bibitem{LDB20}
W.~{Labidi}, C.~{Deppe}, and H.~{Boche}, ``Secure identification for {G}aussian
  channels,'' in \emph{IEEE Int. Conf. Acoust. Speech Sig. Proc. (ICASSP)},
  2020, pp. 2872--2876.

\bibitem{Labidi2021}
W.~Labidi, H.~Boche, C.~Deppe, and M.~Wiese, ``Identification over the
  {G}aussian channel in the presence of feedback,'' \emph{IEEE Int. Symp. Inf.
  Theory (ISIT 2021). Preprint available on arXiv:2102.01198}, 2021.

\bibitem{Ezzine2021}
R.~Ezzine, W.~Labidi, H.~Boche, and C.~Deppe, ``Common randomness generation
  and identification over {G}aussian channels,'' in \emph{Proc. IEEE Global
  Comm. Conf.}\hskip 1em plus 0.5em minus 0.4em\relax IEEE, 2020, pp. 1--6.

\bibitem{Salariseddigh_ITW}
M.~J. Salariseddigh, U.~Pereg, H.~Boche, and C.~Deppe, ``Deterministic
  identification over fading channels,'' in \emph{2020-IEEE Int'l Inf. Theory
  Workshop (ITW)}.\hskip 1em plus 0.5em minus 0.4em\relax IEEE, 2021, pp. 1--5.

\bibitem{AN99}
R.~{Ahlswede} and {Ning Cai}, ``Identification without randomization,''
  \emph{IEEE Trans. Inf. Theory}, vol.~45, no.~7, pp. 2636--2642, 1999.

\bibitem{Salariseddigh_arXiv_ICC}
\BIBentryALTinterwordspacing
M.~J. Salariseddigh, U.~Pereg, H.~Boche, and C.~Deppe, ``Deterministic
  identification over channels with power constraints,'' \emph{Submitted to
  IEEE Trans. Inf. Theory, arXiv:2010.04239}, 2020. [Online]. Available:
  \url{https://arxiv.org/pdf/2010.04239.pdf}
\BIBentrySTDinterwordspacing

\bibitem{J85}
J.~J\'aJ\'a, ``Identification is easier than decoding,'' in \emph{Ann. Symp.
  Found. Comp. Scien. (SFCS)}, 1985, pp. 43--50.

\bibitem{Bur00}
M.~V. Burnashev, ``On the method of types and approximation of output measures
  for channels with finite alphabets,'' \emph{Prob. Inf. Trans.}, vol.~36,
  no.~3, pp. 195--212, 2000.

\bibitem{PP09}
R.~L. Bocchino~Jr, V.~S. Adve, S.~V. Adve, and M.~Snir, ``Parallel programming
  must be deterministic by default,'' in \emph{Proc. 1st USENIC Conf.: Hot
  Topics in Parallelism}, 2009, pp. 4--4.

\bibitem{A09}
E.~{Ar{\i}kan}, ``Channel polarization: A method for constructing
  capacity-achieving codes for symmetric binary-input memoryless channels,''
  \emph{IEEE Trans. Inf. Theory}, vol.~55, no.~7, pp. 3051--3073, 2009.

\bibitem{Salariseddigh_ICC}
M.~J. Salariseddigh, U.~Pereg, H.~Boche, and C.~Deppe, ``Deterministic
  identification over channels with power constraints,'' in \emph{2021-IEEE
  Int'l Conf. Commun. (ICC)}.\hskip 1em plus 0.5em minus 0.4em\relax IEEE,
  2021, pp. 1--6.

\bibitem{Fillmore69}
G.~Fillmore and G.~Lachs, ``Information rates for photocount detection
  systems,'' \emph{IEEE Trans. Inf. Theory}, vol.~15, no.~4, pp. 465--468,
  1969.

\bibitem{Gagliardi76}
R.~M. Gagliardi and S.~Karp, ``Optical communications,'' \emph{New York}, 1976.

\bibitem{Shamai90}
S.~S. Shamai, ``Capacity of a pulse amplitude modulated direct detection photon
  channel,'' \emph{IEEE Proc. I (Commun., Speech and Vision)}, vol. 137, no.~6,
  pp. 424--430, 1990.

\bibitem{Shapiro09}
J.~H. Shapiro, ``The quantum theory of optical communications,'' \emph{IEEE J.
  Sel. Topics Quantum Electron.}, vol.~15, no.~6, pp. 1547--1569, 2009.

\bibitem{Wyner88_I}
A.~D. Wyner, ``Capacity and error exponent for the direct detection photon
  channel. i,'' \emph{IEEE Trans. Inf. Theory}, vol.~34, no.~6, pp. 1449--1461,
  1988.

\bibitem{Wyner88_II}
------, ``Capacity and error exponent for the direct detection photon channel.
  ii,'' \emph{IEEE Trans. Inf. Theory}, vol.~34, no.~6, pp. 1462--1471, 1988.

\bibitem{Massey81}
J.~Massey, ``Capacity, cutoff rate, and coding for a direct-detection optical
  channel,'' \emph{IEEE Trans. Commun.}, vol.~29, no.~11, pp. 1615--1621, 1981.

\bibitem{Verdu99}
S.~Verd{\'u}, ``{P}oisson communication theory,'' presented at the
  International Technion Communication Day in Honor of Israel Bar-David, Haifa,
  Israel, Mar. 25, 1999, Invited Talk, 1999.

\bibitem{Bar73}
I.~Bar-David, ``Information in the time of arrival of a photon packet: Capacity
  of ppm channels,'' \emph{JOSA}, vol.~63, no.~2, pp. 166--170, 1973.

\bibitem{Pierce81}
J.~Pierce, E.~Posner, and E.~Rodemich, ``The capacity of the photon counting
  channel,'' \emph{IEEE Trans. Inf. Theory}, vol.~27, no.~1, pp. 61--77, 1981.

\bibitem{Kabanov78}
Y.~M. Kabanov, ``The capacity of a channel of the {P}oisson type,''
  \emph{Theory of Probability \& Its Applications}, vol.~23, no.~1, pp.
  143--147, 1978.

\bibitem{Davis80}
M.~Davis, ``Capacity and cutoff rate for {P}oisson-type channels,'' \emph{IEEE
  Trans. Inf. Theory}, vol.~26, no.~6, pp. 710--715, 1980.

\bibitem{Wyner84}
A.~Wyner and H.~Landau, ``Optimum waveform signal sets with amplitude and
  energy constraints,'' \emph{IEEE Trans. Inf. Theory}, vol.~30, no.~4, pp.
  615--622, 1984.

\bibitem{Butman82}
S.~Butman, J.~Katz, and J.~Lesh, ``Bandwidth limitations on noiseless optical
  channel capacity,'' \emph{IEEE Trans. Inf. Theory}, vol.~30, no.~5, pp.
  1262--1264, 1982.

\bibitem{Lesh83}
J.~Lesh, ``Capacity limit of the noiseless, energy-efficient optical ppm
  channel,'' \emph{IEEE Trans. Commun.}, vol.~31, no.~4, pp. 546--548, 1983.

\bibitem{Lipes80}
R.~Lipes, ``Pulse-position-modulation coding as near-optimum utilization of
  photon counting channel with bandwidth and power constraints,'' \emph{The
  Deep Space Network Progress Report 42-56, January and February 1980}, pp.
  108--113, 1980.

\bibitem{Shamai91}
S.~Shamai, ``On the capacity of a direct-detection photon channel with
  intertransition-constrained binary input,'' \emph{IEEE Trans. Inf. Theory},
  vol.~37, no.~6, pp. 1540--1550, 1991.

\bibitem{Shamai93}
S.~Shamai and A.~Lapidoth, ``Bounds on the capacity of a spectrally constrained
  {P}oisson channel,'' \emph{IEEE Trans. Inf. Theory}, vol.~39, no.~1, pp.
  19--29, 1993.

\bibitem{Lapidoth93}
A.~Lapidoth, ``On the reliability function of the ideal {P}oisson channel with
  noiseless feedback,'' \emph{IEEE Trans. Inf. Theory}, vol.~39, no.~2, pp.
  491--503, 1993.

\bibitem{Lapidoth98}
A.~Lapidoth and S.~Shamai, ``The {P}oisson multiple-access channel,''
  \emph{IEEE Trans. Inf. Theory}, vol.~44, no.~2, pp. 488--501, 1998.

\bibitem{Lapidoth03_4}
A.~Lapidoth, I.~E. Telatar, and R.~Urbanke, ``On wide-band broadcast
  channels,'' \emph{IEEE Trans. Inf. Theory}, vol.~49, no.~12, pp. 3250--3258,
  2003.

\bibitem{Bross09}
S.~Bross, A.~Lapidoth, and L.~Wang, ``The {P}oisson channel with side
  information,'' in \emph{47th Annu. Allerton Conf. Commun. Control
  Comput.}\hskip 1em plus 0.5em minus 0.4em\relax IEEE, 2009, pp. 574--578.

\bibitem{Frey91}
M.~R. Frey, ``Information capacity of the {P}oisson channel,'' \emph{IEEE
  Trans. Inf. Theory}, vol.~37, no.~2, pp. 244--256, 1991.

\bibitem{Frey92}
------, ``Capacity of the l/sub p/norm-constrained {P}oisson channel,''
  \emph{IEEE Trans. Inf. Theory}, vol.~38, no.~2, pp. 445--450, 1992.

\bibitem{Brady90}
D.~Brady and S.~Verd{\'u}, ``The asymptotic capacity of the direct detection
  photon channel with a bandwidth constraint,'' in \emph{Proc. 28th Allerton
  Conf. Commun. Control Comput.}, 1990, pp. 691--700.

\bibitem{Lapidoth03_1}
A.~Lapidoth and S.~M. Moser, ``Bounds on the capacity of the discrete-time
  {P}oisson channel,'' in \emph{Proc. Allerton Conf. Comm., Control,
  Computing}, vol.~41, no.~1.\hskip 1em plus 0.5em minus 0.4em\relax The
  University; 1998, 2003, pp. 201--210.

\bibitem{Lapidoth03_2}
------, ``The asymptotic capacity of the discrete-time {P}oisson channel,''
  \emph{Proc. Win. Sch. Coding and Inf. Theory, Monte Verita, Ascona,
  Switzerland}, 2003.

\bibitem{Lapidoth08_1}
A.~Lapidoth, L.~Wang, J.~H. Shapiro, and V.~Venkatesan, ``The {P}oisson channel
  at low input powers,'' in \emph{IEEE 25th Conv. Elect. and Electron.
  Engineers in Israel}.\hskip 1em plus 0.5em minus 0.4em\relax IEEE, 2008, pp.
  654--658.

\bibitem{Lapidoth11}
A.~Lapidoth, J.~H. Shapiro, V.~Venkatesan, and L.~Wang, ``The discrete-time
  {P}oisson channel at low input powers,'' \emph{IEEE Trans. Inf. Theory},
  vol.~57, no.~6, pp. 3260--3272, 2011.

\bibitem{Ahmadypour20}
N.~Ahmadypour and A.~Gohari, ``Transmission of a bit over a discrete {P}oisson
  channel with memory,'' \emph{arXiv preprint arXiv:2011.05931}, 2020.

\bibitem{Etemadi19}
A.~Etemadi, H.~Arjmandi, P.~Azmi, and N.~Mokari, ``Capacity bounds for
  diffusive molecular communication over discrete-time compound {P}oisson
  channels,'' \emph{IEEE Comm. Lett.}, vol.~23, no.~5, pp. 793--796, 2019.

\bibitem{Chakraborty07}
K.~Chakraborty and P.~Narayan, ``The {P}oisson fading channel,'' \emph{IEEE
  Trans. Inf. Theory}, vol.~53, no.~7, pp. 2349--2364, 2007.

\bibitem{Wang21}
L.~Wang, ``Covert communication over the {P}oisson channel,'' \emph{IEEE J.
  Sel. Areas Inf. Theory}, vol.~2, no.~1, pp. 23--31, 2021.

\bibitem{Gohari16}
A.~Gohari, M.~Mirmohseni, and M.~Nasiri-Kenari, ``Information theory of
  molecular communication: Directions and challenges,'' \emph{IEEE Trans. Mol.
  Biol. Multi-Scale Commun.}, vol.~2, no.~2, pp. 120--142, 2016.

\bibitem{Cao13}
J.~Cao, ``Discrete-time {P}oisson channel: capacity and signalling design,''
  Ph.D. dissertation, McMaster University, 2013.

\bibitem{Mceliece81}
R.~McEliece, ``Practical codes for photon communication,'' \emph{IEEE Trans.
  Inf. Theory}, vol.~27, no.~4, pp. 393--398, 1981.

\bibitem{Nakano14}
T.~Nakano, T.~Suda, Y.~Okaie, M.~J. Moore, and A.~V. Vasilakos, ``Molecular
  communication among biological nanomachines: A layered architecture and
  research issues,'' \emph{IEEE Trans. Nanobiosci.}, vol.~13, no.~3, pp.
  169--197, 2014.

\bibitem{Schulten00}
K.~Schulten and I.~Kosztin, ``Lectures in theoretical biophysics,''
  \emph{University of Illinois}, vol. 117, 2000.

\bibitem{Redner01}
S.~Redner, \emph{A guide to first-passage processes}.\hskip 1em plus 0.5em
  minus 0.4em\relax Cambridge University Press, 2001.

\bibitem{Wang14}
L.~Wang and G.~W. Wornell, ``A refined analysis of the {P}oisson channel in the
  high-photon-efficiency regime,'' \emph{IEEE Trans. Inf. Theory}, vol.~60,
  no.~7, pp. 4299--4311, 2014.

\bibitem{Aminian15}
G.~Aminian, H.~Arjmandi, A.~Gohari, M.~Nasiri-Kenari, and U.~Mitra, ``Capacity
  of diffusion-based molecular communication networks over lti-{P}oisson
  channels,'' \emph{IEEE Trans. Mol. Biol. Multi-Scale Commun.}, vol.~1, no.~2,
  pp. 188--201, 2015.

\bibitem{Martinez07}
A.~Martinez, ``Spectral efficiency of optical direct detection,'' \emph{JOSA
  B}, vol.~24, no.~4, pp. 739--749, 2007.

\bibitem{Cao10}
J.~Cao, S.~Hranilovic, and J.~Chen, ``Lower bounds on the capacity of
  discrete-time {P}oisson channels with dark current,'' in \emph{25th Biennial
  Symp. Commun.}\hskip 1em plus 0.5em minus 0.4em\relax IEEE, 2010, pp.
  357--360.

\bibitem{Yu14}
Y.~Yu, Z.~Zhang, L.~Wu, and J.~Dang, ``Lower bounds on the capacity for
  {P}oisson optical channel,'' in \emph{Sixth Int'l Conf. Wireless Commun. and
  Sig. Proc. (WCSP)}.\hskip 1em plus 0.5em minus 0.4em\relax IEEE, 2014, pp.
  1--5.

\bibitem{Cheraghchi19}
M.~Cheraghchi and J.~Ribeiro, ``Improved upper bounds and structural results on
  the capacity of the discrete-time {P}oisson channel,'' \emph{IEEE Trans. Inf.
  Theory}, vol.~65, no.~7, pp. 4052--4068, 2019.

\bibitem{Cheraghchi20}
------, ``Non-asymptotic capacity upper bounds for the discrete-time poisson
  channel with positive dark current,'' \emph{arXiv preprint arXiv:2010.14858},
  2020.

\bibitem{Smith71}
J.~G. Smith, ``The information capacity of amplitude-and variance-constrained
  sclar {G}aussian channels,'' \emph{Information and control}, vol.~18, no.~3,
  pp. 203--219, 1971.

\bibitem{Abou01}
I.~C. Abou-Faycal, M.~D. Trott, and S.~Shamai, ``The capacity of discrete-time
  memoryless rayleigh-fading channels,'' \emph{IEEE Trans. Inf. Theory},
  vol.~47, no.~4, pp. 1290--1301, 2001.

\bibitem{Elmoslimany17}
A.~ElMoslimany and T.~M. Duman, ``On the discreteness of capacity-achieving
  distributions for fading and signal-dependent noise channels with
  amplitude-limited inputs,'' \emph{IEEE Trans. Inf. Theory}, vol.~64, no.~2,
  pp. 1163--1177, 2017.

\bibitem{Fahs17}
J.~Fahs and I.~Abou-Faycal, ``On properties of the support of
  capacity-achieving distributions for additive noise channel models with input
  cost constraints,'' \emph{IEEE Trans. Inf. Theory}, vol.~64, no.~2, pp.
  1178--1198, 2017.

\bibitem{Dytso18}
A.~Dytso, M.~Goldenbaum, H.~V. Poor, and S.~S. Shitz, ``When are discrete
  channel inputs optimal?—optimization techniques and some new results,'' in
  \emph{52nd Annu. Conf. Inf. Scien. Sys. (CISS)}.\hskip 1em plus 0.5em minus
  0.4em\relax IEEE, 2018, pp. 1--6.

\bibitem{Dytso21}
A.~Dytso, L.~Barletta, and S.~Shamai, ``Bounds on the number of mass points of
  the capacity achieving distribution of the amplitude constraint {P}oisson
  noise channel,'' \emph{arXiv preprint arXiv:2104.14431}, 2021.

\bibitem{Cao13_1}
J.~Cao, S.~Hranilovic, and J.~Chen, ``Capacity-achieving distributions for the
  discrete-time {P}oisson channel—part i: General properties and numerical
  techniques,'' \emph{IEEE Trans. Commun.}, vol.~62, no.~1, pp. 194--202, 2013.

\bibitem{Cao13_2}
------, ``Capacity-achieving distributions for the discrete-time {P}oisson
  channel—part ii: Binary inputs,'' \emph{IEEE Trans. Commun.}, vol.~62,
  no.~1, pp. 203--213, 2013.

\bibitem{Lapidoth08_2}
A.~Lapidoth and S.~M. Moser, ``On the capacity of the discrete-time {P}oisson
  channel,'' \emph{IEEE Trans. Inf. Theory}, vol.~55, no.~1, pp. 303--322,
  2008.

\bibitem{Brady90_PhD_Diss}
\BIBentryALTinterwordspacing
D.~Brady, ``The analysis of optical, direct detection communication systems
  with point process observations,'' Ph.D. dissertation, Princeton University,
  1990. [Online]. Available:
  \url{https://search.proquest.com/docview/303849140}
\BIBentrySTDinterwordspacing

\bibitem{CHSN13}
J.~H. Conway and N.~J.~A. Sloane, \emph{Sphere packings, lattices and
  groups}.\hskip 1em plus 0.5em minus 0.4em\relax Springer Science \& Business
  Media, 2013, vol. 290.

\bibitem{C10}
H.~Cohn, ``Order and disorder in energy minimization,'' in \emph{Int. Congr.
  Mathematicians. Proc. (ICM) (In 4 Volumes) Vol. I: Plenary Lectures and
  Ceremonies Vols. II--IV: Invited Lectures}.\hskip 1em plus 0.5em minus
  0.4em\relax World Scientific, 2010, pp. 2416--2443.

\bibitem{Mitrinovic13}
D.~S. Mitrinovic, J.~Pecaric, and A.~M. Fink, \emph{Classical and new
  inequalities in analysis}.\hskip 1em plus 0.5em minus 0.4em\relax Springer
  Sci. \& Bus. Media, 2013, vol.~61.

\bibitem{ZT20}
Q.~Zhang and V.~Y. Tan, ``Covert identification over binary-input memoryless
  channels,'' \emph{arXiv preprint arXiv:2007.13333}, 2020.

\bibitem{BGT13}
B.~A. Bash, D.~Goeckel, and D.~Towsley, ``Limits of reliable communication with
  low probability of detection on awgn channels,'' \emph{IEEE J. Sel. Areas
  Commun.}, vol.~31, no.~9, pp. 1921--1930, 2013.

\bibitem{Atzori10}
\BIBentryALTinterwordspacing
L.~Atzori, A.~Iera, and G.~Morabito, ``The internet of things: A survey,''
  \emph{Computer Networks}, vol.~54, no.~15, pp. 2787--2805, 2010. [Online].
  Available:
  \url{https://www.sciencedirect.com/science/article/pii/S1389128610001568}
\BIBentrySTDinterwordspacing

\bibitem{Akyildiz10}
I.~F. Akyildiz and J.~M. Jornet, ``The internet of nano-things,'' \emph{IEEE
  Wireless Commun.}, vol.~17, no.~6, pp. 58--63, 2010.

\bibitem{Dress15}
F.~Dressler and S.~Fischer, ``Connecting in-body nano communication with body
  area networks: Challenges and opportunities of the internet of nano things,''
  \emph{Nano Commun. Networks}, vol.~6, pp. 29--38, 2015.

\bibitem{Aky15}
I.~Akyildiz, M.~Pierobon, S.~Balasubramaniam, and Y.~Koucheryavy, ``The
  internet of bio-nano things,'' \emph{IEEE Commun. Mag.}, vol.~53, pp. 32--40,
  2015.

\end{thebibliography}
\end{document}